\newtheorem{definition}{Definition}%
\newtheorem{Lemma}{Lemma}%
\begin{document}

\definecolor{lime}{HTML}{A6CE39}
\DeclareRobustCommand{\orcidicon}{
\begin{tikzpicture}
\draw[lime, fill=lime] (0,0)
circle[radius=0.13]
node[white]{{\fontfamily{qag}\selectfont \tiny \.{I}D}};
\end{tikzpicture}
\hspace{-2mm}
}
\foreach \x in {A, ..., Z}{%
\expandafter\xdef\csname orcid\x\endcsname{\noexpand\href{https://orcid.org/\csname orcidauthor\x\endcsname}{\noexpand\orcidicon}}
}

\title{Quantum Key-Recovery Attacks on FBC Algorithm}
\author{Yan-Ying Zhu, Bin-Bin Cai, Fei Gao, and Song Lin
\thanks{$\bullet$ Yan-Ying Zhu is with the College of Computer and Cyber Security, Fujian Normal University, Fuzhou, 350117, China. }
\thanks{$\bullet$ Bin-Bin Cai is with the College of Computer and Cyber Security,
Fujian Normal University, Fuzhou, 350117, China, the State Key Laboratory
of Networking and Switching Technology, Beijing University of Posts and
Telecommunications, Beijing, 100876, China, and with the Digital Fujian
Internet-of-Things Laboratory of Environmental Monitoring, Fujian Normal
University, Fuzhou, 350117, China (e-mail: cbb@fjnu.edu.cn).}
\thanks{$\bullet$ Fei Gao is with the State Key Laboratory of Networking and Switching Technology, Beijing University of Posts and Telecommunications, Beijing, 100876, China. }
\thanks{$\bullet$ Song Lin is with the College of Computer and Cyber Security, Fujian Normal University, Fuzhou, 350117, China (e-mail: lins95@fjnu.edu.cn).}
\thanks{(Corresponding authors: Bin-Bin Cai, Song Lin.)}
}

\markboth{Journal of \LaTeX\ Class Files,~Vol.~XX, No.~X, August~2025}%
{Shell \MakeLowercase{\textit{et al.}}: A Sample Article Using IEEEtran.cls for IEEE Journals}


\maketitle

\begin{abstract}
With the advancement of quantum computing, symmetric cryptography faces new challenges from quantum attacks.  These attacks are typically classified into two models: Q1 (classical queries) and Q2 (quantum superposition queries). In this context, we present a comprehensive security analysis of the FBC algorithm considering quantum adversaries with different query capabilities. In the Q2 model, we first design 4-round polynomial-time quantum distinguishers for FBC-F and FBC-KF structures, and then perform $r(r>6)$-round quantum key-recovery attacks. Our attacks require $O(2^{(2n(r-6)+3n)/2})$ quantum queries, reducing the time complexity by a factor of $2^{4.5n}$ compared with quantum brute-force search, where $n$ denotes the subkey length. Moreover, we give a new 6-round polynomial-time quantum distinguisher for FBC-FK structure. Based on this, we construct an $r(r>6)$-round quantum key-recovery attack with complexity $O(2^{n(r-6)})$. Considering an adversary with classical queries and quantum computing capabilities, we demonstrate low-data quantum key-recovery attacks on FBC-KF/FK structures in the Q1 model. These attacks require only a constant number of plaintext-ciphertext pairs, then use the Grover algorithm to search the intermediate states, thereby recovering all keys in $O(2^{n/2})$ time.
\end{abstract}

\begin{IEEEkeywords}
FBC algorithm, quantum attacks, key-recovery, low-data
\end{IEEEkeywords}

\section{\label{sec:1}Introduction}
\IEEEPARstart{I}{n} recent years, the advent of quantum algorithms has raised growing concerns about potential security threats \cite{Su2025, Wu2024,Wu2025}. With the continuous advancement of quantum computing technology, many cryptographic schemes considered secure in the classical setting are facing unprecedented challenges \cite{Kuwakado2010, Kuwakado20102,Kaplan2016}. Indeed, the security of traditional cryptosystems is founded on specific computationally hard problems, whose security assumptions have long been defined under the classical attacker model. However, once the adversary possesses quantum capabilities, the original assumptions of the cryptographic security system no longer hold, and its security boundaries are broken. Since Shor \cite{Shor1994,Shor1997} proposed the famous polynomial-time algorithm, asymmetric encryption algorithms (e.g., RSA \cite{RSA1978}) that rely on integer factorization and discrete logarithm problem have been proven to be vulnerable to quantum attacks. For unstructured search problems, Grover algorithm \cite{Grover1996} achieves a quadratic speedup compared with classical algorithm. When applied to key-recovery attacks on symmetric cryptography, it effectively reduces the key length by half. In symmetric cryptography, the security of most schemes is not based on structured mathematical problems. Without known vulnerabilities, quantum attacks on symmetric cryptography mainly depend on Grover algorithm for brute-force key search. Consequently, doubling the key length is generally sufficient to achieve quantum resistance, suggesting that symmetric schemes are less affected by quantum threats. Nevertheless, Kuwakado and Morii \cite{Kuwakado20102} introduced a novel quantum attack method against certain generic constructions of symmetric key schemes. They showed that the Even-Mansour construction can be broken in polynomial-time $O(n)$ using Simon algorithm \cite{Simon1997}, where $n$ denotes the key length. This finding reveals that Grover algorithm is not the only threat to symmetric cryptography in the quantum setting. The following work by Kaplan et al. \cite{Kaplan2016} showed that several widely used modes of operation for authentication and authenticated encryption (such as CBC-MAC, PMAC, GMAC, and some CAESAR competition candidates) were also broken by the Simon algorithm. These attacks achieve exponential speedups compared with classical attacks on cryptographic schemes. Subsequently, more researchers began using Simon algorithm to design quantum attacks on block cipher structures \cite{Luo2019,You2020,Qian2021,Li2022,Li2021}.

Besides, Simon algorithm can also be combined with Grover algorithm to attack certain cipher structures with expanded keys. In 2017, Leander et al. \cite{Leander2017} proposed the
Grover-meets-Simon algorithm, in which Simon algorithm is used as a distinguisher within Grover search. Specifically, the adversary first uses Grover algorithm to search for partial keys. If the guessed keys are correct, the adversary can obtain a periodic function, and then use Simon algorithm to determine its period. For the key-recovery problem of the FX structure \cite{Kilian1996}, the query complexity of the Grover search algorithm is \(O((n+l)2^{(n+l)/2})\), while the Grover-meets-Simon algorithm requires only \(O((n+l)2^{n/2})\) queries, where $n$ and $l$ denote the original key length and the key expansion length, respectively. In addition, for encryption algorithms that employ whitening key expansion techniques \cite{Onions2001}, the Grover-meets-Simon algorithm can also significantly reduce the query complexity of key-recovery attacks. Building on this technique, Dong et al. \cite{Dong2018} proposed new quantum key-recovery attacks by adding several rounds to the quantum distinguisher on Feistel construction. Since then, the Grover-meets-Simon algorithm has been widely applied to Feistel structures \cite{Ito20192,Canale2022}, Type-1/2/3 generalized Feistel structures(GFSs) \cite{Dong2019,Ni2019,Sun2023,S2020,Zhang2023}, and Feistel variants \cite{Cui2021,Bo2021,Kun2022}. The emergence of these quantum attacks has also promoted the development of various quantum algorithms for analyzing cryptographic protocols \cite{Bonnetain2021,Zhou2021,Bon20192,Tan2022,Zhou2023}.

Obviously, the above results are obtained in the Q2 model \cite{Kaplan20162} (i.e., adversaries with the abilities of quantum superposition queries and quantum computation). Considering the actual capabilities of the adversary, quantum attacks in the Q1 model \cite{Kaplan20162} (i.e., adversaries with the abilities of classical queries and quantum computation) have also been proposed. In 2019, Bonnetain et al. \cite{Bonnetain2019} proposed the offline Simon algorithm based on the Grover-meets-Simon algorithm. The algorithm allows quantum attacks to be performed even when the encryption oracle accepts only classical queries, and its complexity is comparable to that of the Grover-meets-Simon algorithm. In 2020, Rahman et al. \cite{Rahman2020} proposed a new attack on hash counters using an offline computation approach, building on the work of Ref. \cite{Bonnetain2019}. In the same year, Bonnetain et al. \cite{Bonnet2020} provided the first complete implementation of the offline Simon algorithm and evaluated the attack costs on several lightweight cryptographic schemes. Subsequently, Yu et al. \cite{Yu2023} analyzed the details of the offline Simon algorithm. In 2022, Daiza et al. \cite{Daiza2022} proposed a new low-data quantum key-recovery attack method on 3-round Feistel-KF structure in the Q1 model. Recently, Xu et al. \cite{Xu2024} designed low-data key-recovery attacks for different block cipher structures, including Lai-Massey, Misty, Type-1 GFS, SMS4 and MARS. These algorithms have stronger practicality in attacking block ciphers.

\begin{table}[t]
\caption{Results of quantum cryptanalysis on FBC algorithm}\label{table1}
\begin{tabular}{  >{\centering\arraybackslash}p{1.8cm}
  >{\centering\arraybackslash}p{1.6cm}
  >{\centering\arraybackslash}p{1.7cm}
  >{\centering\arraybackslash}p{2cm}}
\toprule
Structure & Distinguisher & Key-recovery rounds & Complexity (log) \\
\midrule
\multirow{2}{*}{FBC-F \cite{Bo2021}} & \multirow{2}{*}{3 (Q2)} & 5 & $\frac{3n}{2}$\\
\cmidrule{3-4}
& & $r>5$ & $\frac{3n+(r-5)\cdot 2n}{2}$\\
\midrule
\multirow{2}{*}{FBC-F \cite{Kun2022}} & \multirow{2}{*}{4 (Q2)} & 6 & $\frac{3n}{2}$\\
\cmidrule{3-4}
& & $r>6$ & $\frac{3n+(r-6)\cdot 2n}{2}$\\
\midrule
\multirow{2}{*}{FBC-F (ours)} & \multirow{2}{*}{4 (Q2)} & 6 & $\frac{3n}{2}$\\
\cmidrule{3-4}
& & $r>6$ & $\frac{3n+(r-6)\cdot 2n}{2}$\\
\midrule
\multirow{2}{*}{FBC-KF (ours)} & 4 (Q2) & $r>6$ & $\frac{3n+(r-6)\cdot 2n}{2}$\\
\cmidrule{2-4}
&  $-$ & 4 (Q1) & $\frac{n}{2}$\\
\midrule
\multirow{2}{*}{FBC-FK (ours)} & 6 (Q2) & $r>6$ & $(r-6)\cdot n$\\
\cmidrule{2-4}
&  $-$ & 5 (Q1) & $\frac{n}{2}$\\
\bottomrule
\end{tabular}
\end{table}

\noindent\textbf{Contribution} In this paper, based on the different query capabilities of adversaries, we propose key-recovery attacks for the FBC algorithm in the Q1 and Q2 models, respectively. First, we analyze the algebraic structures of the FBC-F, FBC-KF, and FBC-FK schemes. In the Q2 model, we construct 4-round polynomial-time quantum distinguishers on FBC-F and FBC-KF structures, and design $r(r>6)$-round quantum key-recovery attacks. Our attacks require $O(2^{(2n(r-6)+3n)/2})$ quantum queries, thereby reducing the time complexity by a factor of $2^{4.5n}$ compared with quantum brute-force search, where $n$ denotes the subkey length. Moreover, we construct a new 6-round polynomial-time quantum distinguisher on FBC-FK structure in quantum chosen-plaintext attack (qCPA) setting. Based on this distinguisher, a quantum key-recovery attack can be performed with a time complexity of $O(2^{n(r-6)})$. The comparison of previous quantum attacks and our attacks on FBC structure is shown in Table \ref{table1}. For the FBC-F structure, the number of rounds covered by our constructed quantum distinguisher is of the same order of magnitude as that in Ref. \cite{Kun2022}, and both outperform the results presented in Ref.~\cite{Bo2021}. It is worth noting that our construction method differs from that of Ref. \cite{Kun2022}, and the resulting periodic function exhibits distinct structural characteristics. This indicates that the weakness of the cipher is not confined to a specific configuration. Building on this approach, we further propose a 4-round quantum distinguisher for the FBC-KF structure and a 6-round quantum distinguisher for the FBC-FK structure. Moreover, we design corresponding quantum key-recovery attacks for the $r$-round versions of these structures.

Second, in the Q1 model, we present low-data quantum key-recovery attacks targeting both the 4-round FBC-KF structure and the 5-round FBC-FK structure. By leveraging Grover algorithm to efficiently search for intermediate states, the time complexity for each Grover search step is $O(2^{n/2})$. As a result, all keys can be recovered with only $O(1)$ data complexity and $O(2^{n/2})$ time complexity, while the classical storage complexity remains negligible.

\noindent\textbf{Organization} This paper is organized as follows. Sect. \ref{sec2} introduces some essential preliminaries. In Sect. \ref{sec3}, we propose new quantum distinguishers for the FBC-F, FBC-KF, and FBC-FK structures, along with corresponding $r$-round key-recovery attacks. Sect.~\ref{sec4} presents new low-data key-recovery attacks on FBC-KF/FK structures in the Q1 model. Finally, the conclusions are presented in Sect. \ref{sec5}.

\section{Preliminaries}\label{sec2}
In this section, based on the method of key injection, we introduce three structural variants of the FBC algorithm, namely FBC-F, FBC-KF, and FBC-FK structures. We then give several quantum algorithms and the low-data quantum key-recovery attack.
\subsection{The construction settings of the FBC algorithm}

    \begin{figure*}[]
        \centering
        \subfloat[\footnotesize Feistel-F]{
            \includegraphics[height=0.17\textwidth,width=0.17\textwidth]{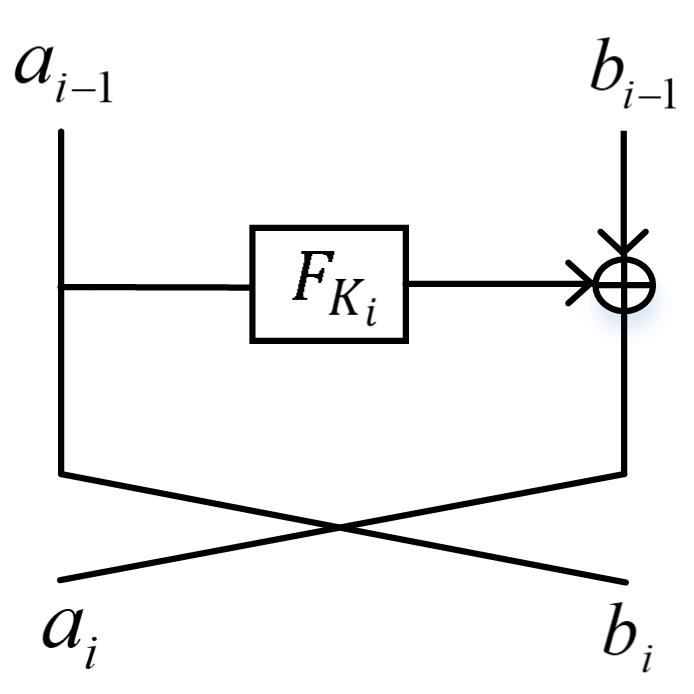}
        \label{fig1(a)}
        }
        \hfil
        \subfloat[\footnotesize Feistel-KF]{
            \includegraphics[height=0.17\textwidth,width=0.2\textwidth]{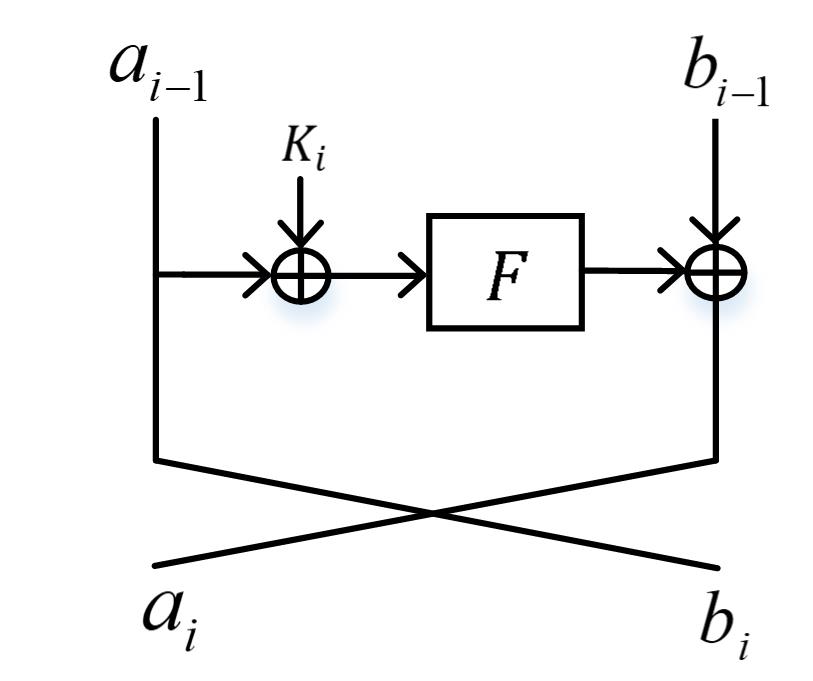}
        \label{fig1(b)}
        }
        \hfil
        \subfloat[\footnotesize Feistel-FK]{
            \includegraphics[height=0.17\textwidth,width=0.198\textwidth]{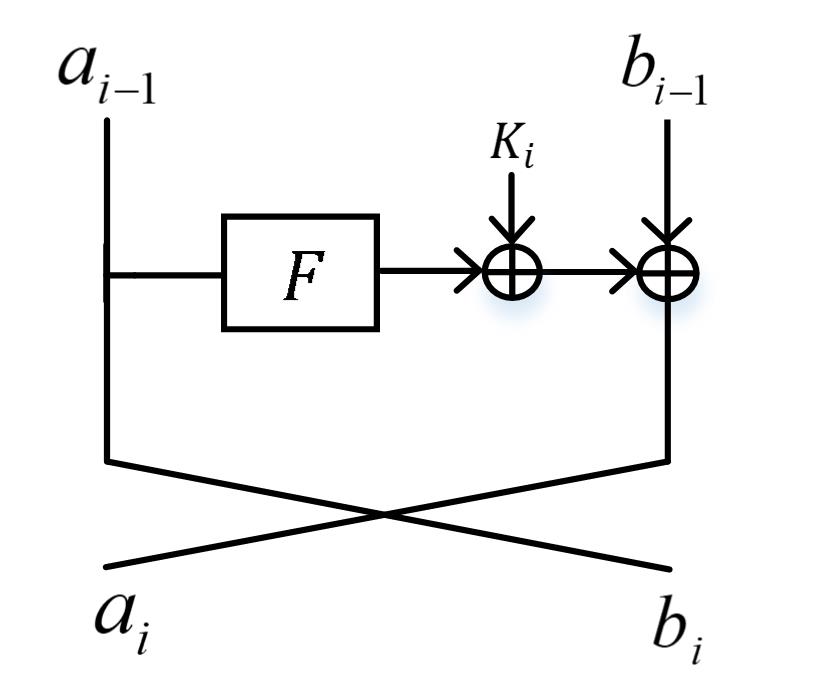}
        \label{fig1(c)}
        }
        \caption{Feistel-F/KF/FK structures.}
        \label{fig1}
    \end{figure*}

The traditional $r$-round Feistel structure (also known as Feistel-F \cite{Ito20192}) takes a plaintext $P=(a_{0}, b_{0})$ as input, where $a_{0}, b_{0} \in\{0,1\}^{n / 2}$, and its $i$-th round structure is illustrated in Fig. \ref{fig1(a)}. Then, it calculates $(a_{i}, b_{i})=(b_{i-1} \oplus F_{k_{i}}(a_{i-1}), a_{i-1})$, where $F_{K_{i}}:\{0,1\}^{n / 2} \rightarrow\{0,1\}^{n / 2}$ is a random function with a subkey $K_{i}\in \{0,1\}^{n/2}$. However, instantiating a distinct key-dependent random function for each round incurs high implementation cost, which is often practically infeasible. In order to design block ciphers suitable for practical applications, Ito et al. \cite{Ito20192} proposed the Feistel-KF structure (the $i$-th round is shown in Fig. \ref{fig1(b)}).
In this design, the subkey \(K_i \) is XORed into \(a_{i-1}\), and the result is used as the input of the public function \(F\). The output of $i$-th round is \(\left(a_i, b_i\right) = \left(b_{i-1} \oplus F\left(a_{i-1} \oplus K_i\right), a_{i-1}\right)\). If the subkey \(K_i\) is injected outside the function \(F\), the output becomes \(\left(a_i, b_i\right) = \left(b_{i-1} \oplus F\left(a_{i-1}\right) \oplus K_i, a_{i-1}\right)\). The \(i\)-th round of this variant is shown in Fig. \ref{fig1(c)}. It is called the Feistel-FK structure.

The FBC algorithm \cite{KFeng2019} adopts a $4$-branched dual Feistel structure as shown in Fig. \ref{fig2}. The input length is $4n$ bits in total and is divided into four branches, each consisting of $n$ bits. Similar to the naming convention used by Ito et al. \cite{Ito20192}, we name the FBC algorithm as the FBC-F structure. $F_{1}^{i}$ and $F_{2}^{i}$ are the $i$-th round of random functions which absorb the independent round subkeys $k_{1}^{i}$ and $k_{2}^{i}$, respectively. Assume that $(x_{0}^{i-1}, x_{1}^{i-1}, x_{2}^{i-1}, x_{3}^{i-1})\in(\{0,1\}^{n})^{4}$ is the input of the $i$-th round FBC-F structure, the corresponding output is $(x_{0}^{i}, x_{1}^{i}, x_{2}^{i}, x_{3}^{i})=(x_{1}^{i-1} \oplus F_{1}^{i}(x_{0}^{i-1}), x_{0}^{i-1} \oplus x_{2}^{i-1} \oplus F_{2}^{i}(x_{3}^{i-1}), x_{3}^{i-1} \oplus x_{1}^{i-1} \oplus F_{1}^{i}(x_{0}^{i-1}), x_{2}^{i-1} \oplus F_{2}^{i}(x_{3}^{i-1}))$.

Based on the key injection method, we extend FBC-F structure into two new variants named FBC-KF and FBC-FK structures, as shown in Fig. \ref{fig3} and Fig. \ref{fig4}, respectively. In the FBC-KF structure, the output of the $i$-th round is
$(x_{0}^{i}, x_{1}^{i}, x_{2}^{i}, x_{3}^{i})=(x_{1}^{i-1} \oplus F_{1}^{i}(x_{0}^{i-1} \oplus k_{1}^{i}), x_{0}^{i-1} \oplus x_{2}^{i-1} \oplus F_{2}^{i}(x_{3}^{i-1}\oplus k_{2}^{i}), x_{3}^{i-1} \oplus x_{1}^{i-1} \oplus F_{1}^{i}(x_{0}^{i-1}\oplus k_{1}^{i}), x_{2}^{i-1} \oplus F_{2}^{i}(x_{3}^{i-1}\oplus k_{2}^{i}))$.
Similarly, we can obtain the output of the FBC-FK structure as $(x_{0}^{i}, x_{1}^{i}, x_{2}^{i}, x_{3}^{i})=(x_{1}^{i-1} \oplus k_{1}^{i} \oplus F_{1}^{i}(x_{0}^{i-1}), x_{0}^{i-1} \oplus x_{2}^{i-1}\oplus k_{2}^{i} \oplus F_{2}^{i}(x_{3}^{i-1}), x_{3}^{i-1} \oplus x_{1}^{i-1}\oplus k_{1}^{i} \oplus F_{1}^{i}(x_{0}^{i-1}), x_{2}^{i-1}\oplus k_{2}^{i} \oplus F_{2}^{i}(x_{3}^{i-1}))$. Here, the $F_1^i$ and $F_2^i$ functions in Fig. \ref{fig3} and Fig. \ref{fig4} are public functions.

\begin{figure}
    \centering
    \includegraphics[width=0.7\linewidth]{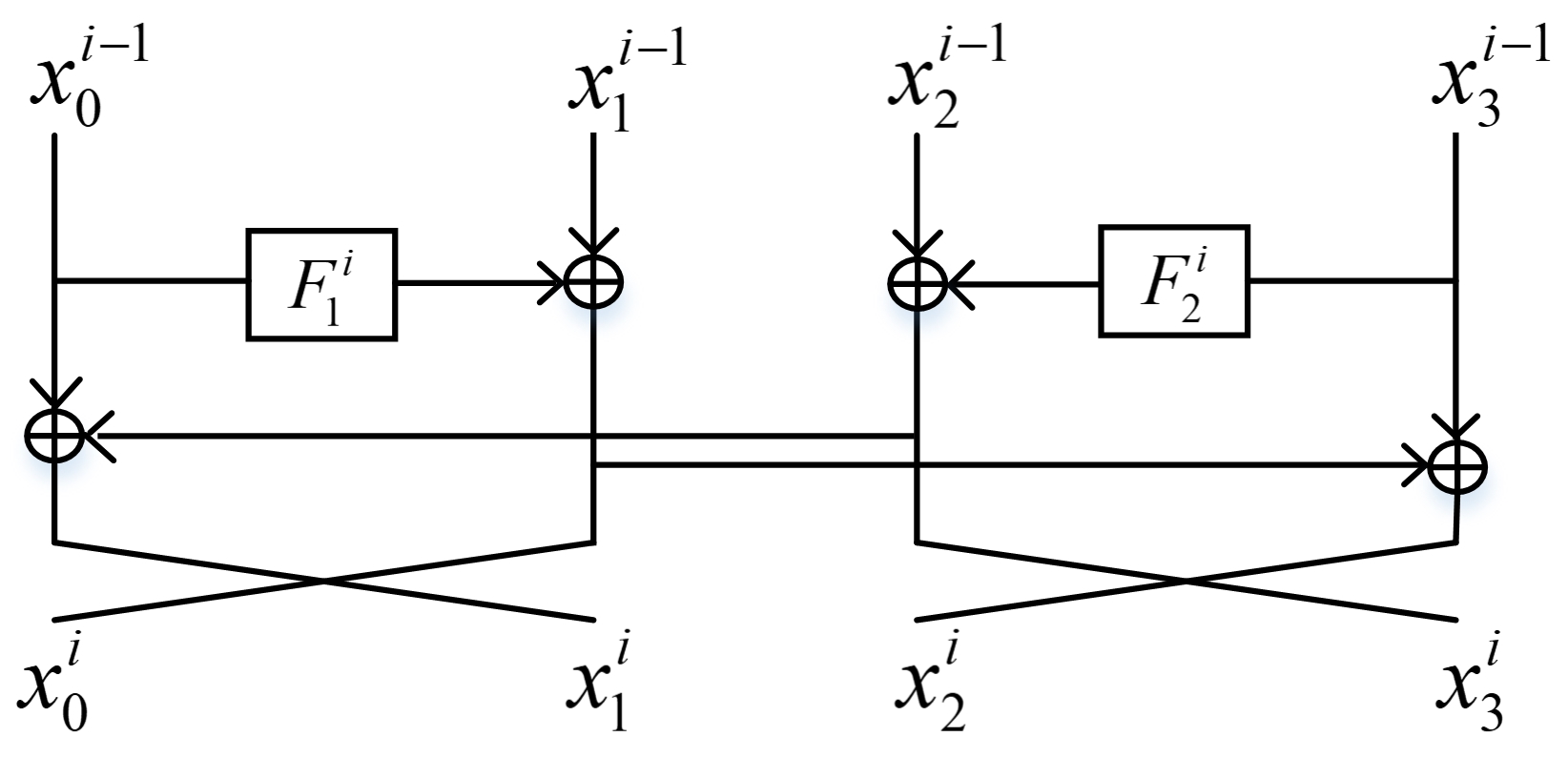}
    \caption{The $i$-th round FBC-F structure.}
    \label{fig2}
\end{figure}
\begin{figure}
    \centering
    \includegraphics[width=0.76\linewidth]{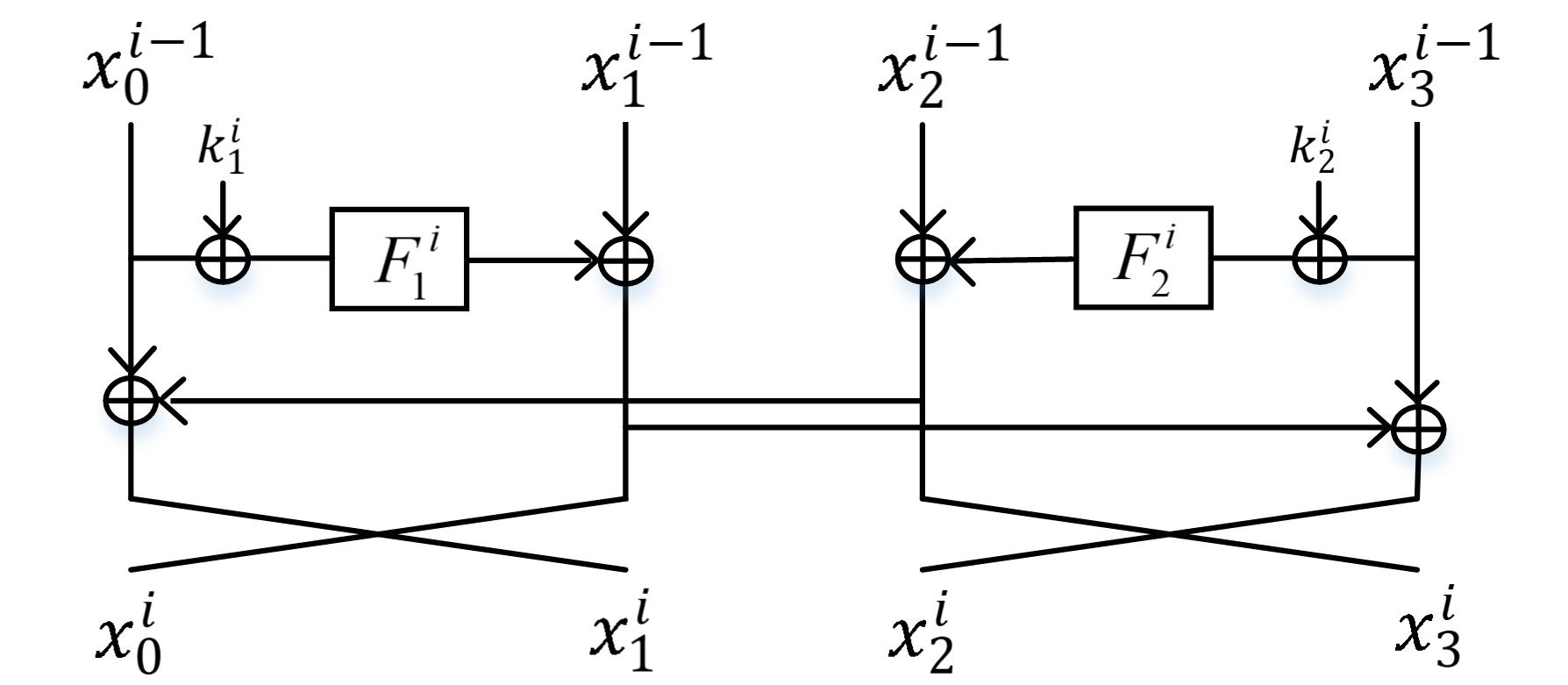}
    \caption{The $i$-th round FBC-KF structure.}
    \label{fig3}
\end{figure}
\begin{figure}
    \centering
    \includegraphics[width=0.7\linewidth]{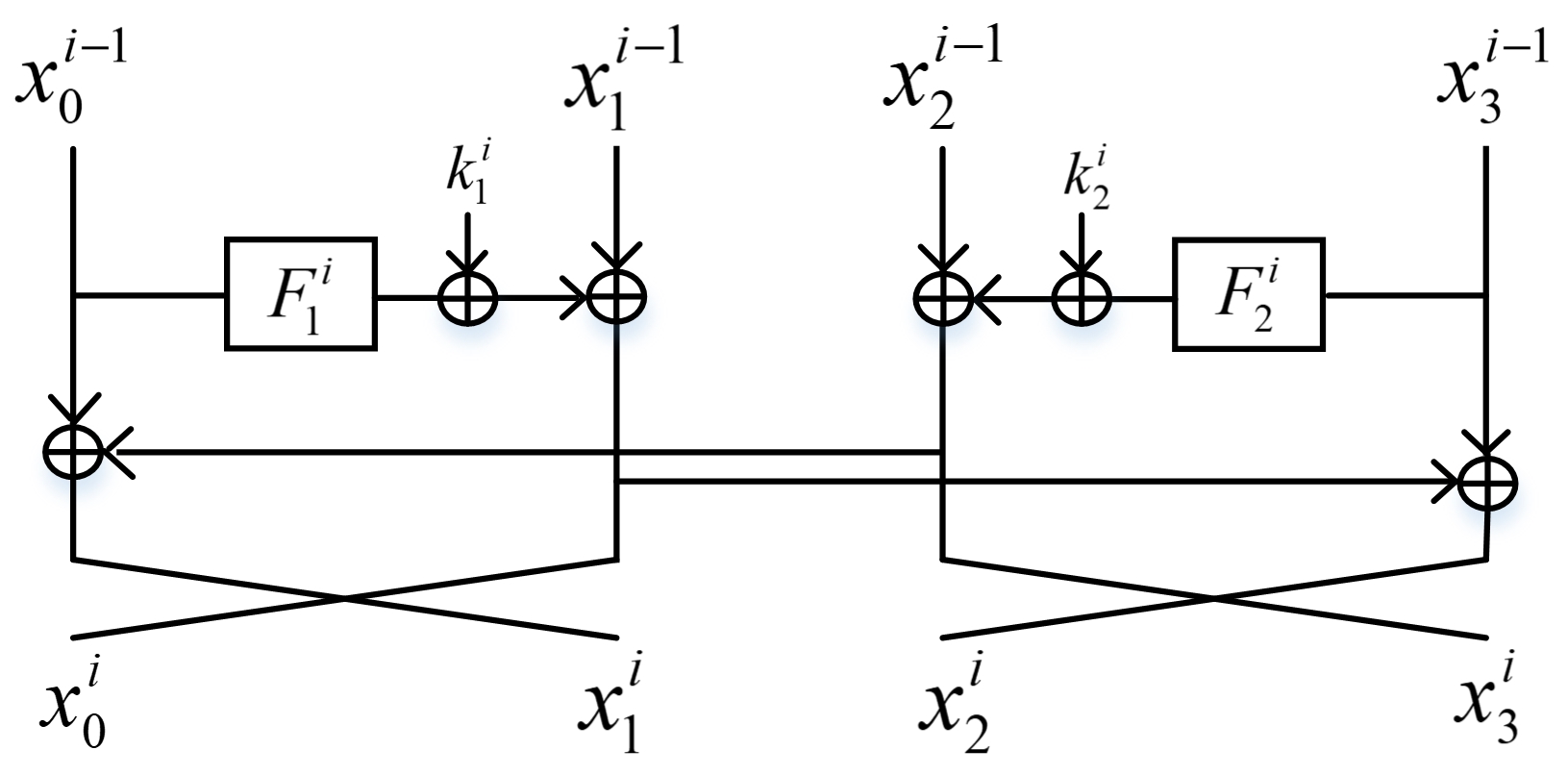}
    \caption{The $i$-th round FBC-FK structure.}
    \label{fig4}
\end{figure}

\par \subsection{Simon algorithm}

Simon algorithm is the first quantum algorithm to achieve exponential speedup over the optimal classical algorithms for solving Simon problem. It can efficiently determine the hidden period of a two-to-one function.
\begin{definition}
(Simon \cite{Simon1997}) Given a function $f:\{0,1\}^{n} \rightarrow\{0,1\}^{n}$, suppose that there exists a unique non-zero period $s \in\{0,1\}^{n}$ such that $f(x)=f(y) \Leftrightarrow x \oplus y \in\{0, s\}$ holds for any $x, y \in\{0,1\}^{n}$. The goal is to find the period $s$.
\end{definition}

Classically, the optimal time complexity to solve this problem is $ O(2^{n / 2})$. However, Simon algorithm only requires $O(n)$ quantum queries to find this hidden period $s$. The algorithm includes the following steps.

Step 1. Initializing two $n$-bit quantum registers to state $|0\rangle^{\otimes n}|0\rangle^{\otimes n}$. Then applying the Hadamard transform $H^{\otimes n}$ to the first register to generate a uniform superposition \[ \left|\psi_{1}\right\rangle=\frac{1}{\sqrt{2^{n}}} \sum_{x \in\{0,1\}^{n}}|x\rangle|0\rangle. \]

Step 2. A quantum oracle query on the function $f$ maps $\left|\psi_{1}\right\rangle$ to
\[
\left|\psi_{2}\right\rangle=\frac{1}{\sqrt{2^{n}}} \sum_{x \in\{0,1\}^{n}}|x\rangle|f(x)\rangle.
\]
Step 3. Measuring the second register yields $f\left(x_{0}\right)$, the first register collapses to
\[
\left|\psi_{3}\right\rangle=\frac{1}{\sqrt{2}}\left(\left|x_{0}\right\rangle+\left|x_{0} \oplus s\right\rangle\right).
\]

Step 4. Applying the Hadamard transform $H^{\otimes n}$ to the first register, the state is now
\[
\left|\psi_{4}\right\rangle=\frac{1}{\sqrt{2^{n+1}}} \sum_{y \in\{0,1\}^{n}}\left((-1)^{y \cdot x_{0}}+(-1)^{y \cdot\left(x_{0} \oplus s\right)}\right)|y\rangle,
\]
where $(-1)^{y \cdot\left(x_{0} \oplus s\right)}=(-1)^{y \cdot x_{0}}(-1)^{y \cdot s}$. The coefficient is non-zero if and only if $y \cdot s=0$. Here,
$y \cdot s=0$ means that $y_{0} \cdot s_{0}+y_{1} \cdot s_{1}+ \cdots +y_{n} \cdot s_{n}=0 \pmod{2}$.

Step 5. Measuring the first register yields a value $y$ that satisfies $y \cdot s=0$.

Repeating the above steps 1-5 $O(n)$ times to obtain $(n-1)$ linearly independent equations, the unique solution $s$ can be found by Gaussian elimination.

\par \subsection{Grover algorithm}

Grover algorithm provides a quadratic speedup for unstructured search problems. It is widely applicable to a variety of problems, including preimage attacks \cite{WangP2020} on cryptographic functions, satisfiability problems \cite{WangR2021}, and database search \cite{Kain2021}.

\begin{definition}
(Grover \cite{Grover1996}) Consider an unstructured database of size $N=2^{n}$. Assume that there is exactly one marked item $x_{0}$, the goal is to find $x_{0}$. In other words, let $f:\{0,1\}^{n} \rightarrow\{0,1\}$ be a Boolean function such that $f(x)=1$ if and only if $x=x_{0}$, otherwise $f(x)=0$. The objective is to identify $x_{0}$.
\end{definition}

In the classical setting, the time complexity to find the target $x_{0}$ is usually $O(2^n)$. However, Grover algorithm can find $x_{0}$ in $O(2^{n/2})$ time in the quantum setting. The steps of Grover algorithm are as follows.

Step 1. Initializing the quantum state $|0\rangle^{\otimes n}|0\rangle$. Applying the Hadamard transform $H^{\otimes n}$ to the first $n$-qubit and performing the operation $HX$ on the last qubit. This results in the state
\[
|\psi\rangle=\frac{1}{\sqrt{2^{n}}} \sum_{x \in\{0,1\}^{n}}|x\rangle|-\rangle,
\]
where the matrix forms of $H$ and $X$ are $H = \frac{1}{\sqrt{2}} \begin{bmatrix} 1 & 1 \\ 1 & -1 \end{bmatrix}$ and $X = \begin{bmatrix} 0 & 1 \\ 1 & 0 \end{bmatrix}$, respectively.

Step 2. Constructing a quantum oracle $\mathcal{O}: \mathcal{O}|x\rangle|-\rangle=(-1)^{f(x)}|x\rangle|-\rangle$, where $f(x)=1$ if $x=x_{0}$. Otherwise, $f(x)=0$.

Step 3. Applying Grover iteration for $R \approx\left\lceil\frac{\pi}{4} \sqrt{2^{n}}\right\rceil$ times to yield
\[
[(2|\psi\rangle\langle\psi|-I) \mathcal{O}]^{R}|\psi\rangle \approx\left|x_{0}\right\rangle|-\rangle.
\]

Step 4. Measuring the first $n$-qubit to get $x_{0}$.

\par \subsection{Grover-meets-Simon algorithm}

In 2017, Leander et al. \cite{Leander2017} proposed the Grover-meets-Simon algorithm to attack the FX structure. They showed that using whitening keys does not significantly enhance security.

\begin{definition}
(Leander \cite{Leander2017}) Suppose that $f:\{0,1\}^{m} \times\{0,1\}^{n} \rightarrow\{0,1\}^{i}$ is a function, where $m$ is in $O(n)$. There exists a unique $i_{0} \in\{0,1\}^{m}$ satisfying $f(i_{0}, x)=f(i_{0}, x \oplus s)$ for any $x \in\{0,1\}^{n}$, where $s \in\{0,1\}^{n}\setminus \{0\}^{n}$. The goal is to find the unique $i_{0}$ and the period $s$.
\end{definition}

\begin{figure}
    \centering
    \includegraphics[width=0.65\linewidth]{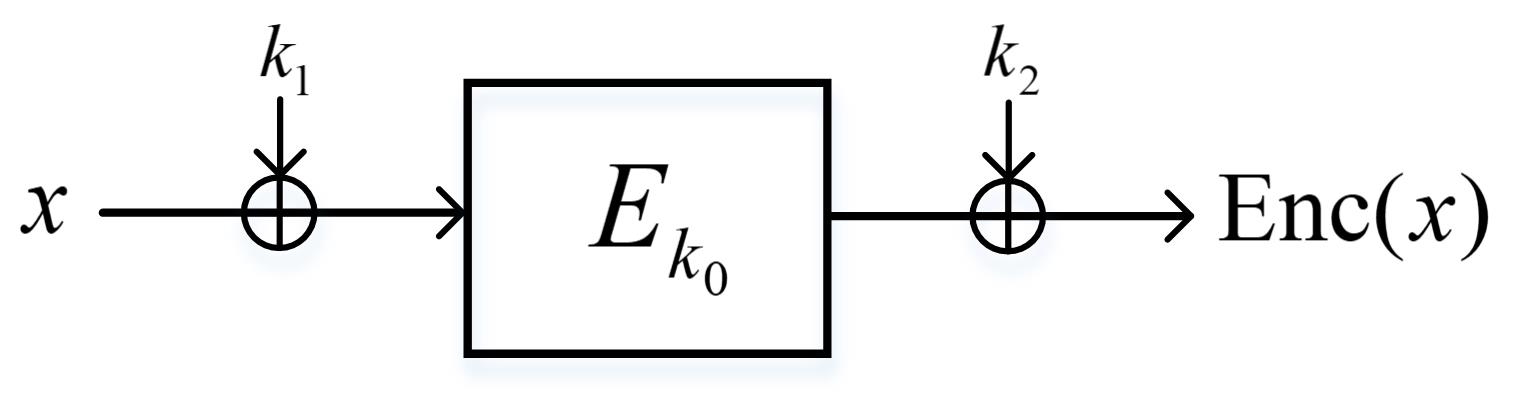}
    \caption{FX structure.}
    \label{fig5}
\end{figure}

The FX structure satisfies $\operatorname{Enc}(x)=E_{k_{0}}(x \oplus k_{1}) \oplus k_{2}$ as shown in Fig. \ref{fig5}. They constructed a function $f(k,x)=E_{k_{0}}(x \oplus k_{1}) \oplus k_{2} \oplus E_{k}(x)$ and used Grover algorithm to exhaustive search for all possible keys $k$ in Ref. \cite{Leander2017}. Obviously, the function $f(k,x)$ is periodic with the period $k_{1}$ for all $x$ when $k=k_{0}$. Otherwise, $f(k,x)$ is not periodic. In this case, the correct $k_{0}$ can be obtained. The period $k_{1}$ can be obtained using Simon algorithm, and then $k_{2}$ can be computed through a simple XOR operation.

\par \subsection{Low-data quantum key-recovery attack}

Daiza et al. \cite{Daiza2022} showed a quantum key-recovery attack on 3-round Feistel-KF structure (see Fig. \ref{fig6}) in the Q1 model. Under the qCPA setting, the adversary uses Grover algorithm to recover all keys in $O(2^{n/2})$ time, where $n$ denotes the key length. The steps of the attack are as follows.
\begin{figure}
    \centering
    \includegraphics[width=0.8\linewidth]{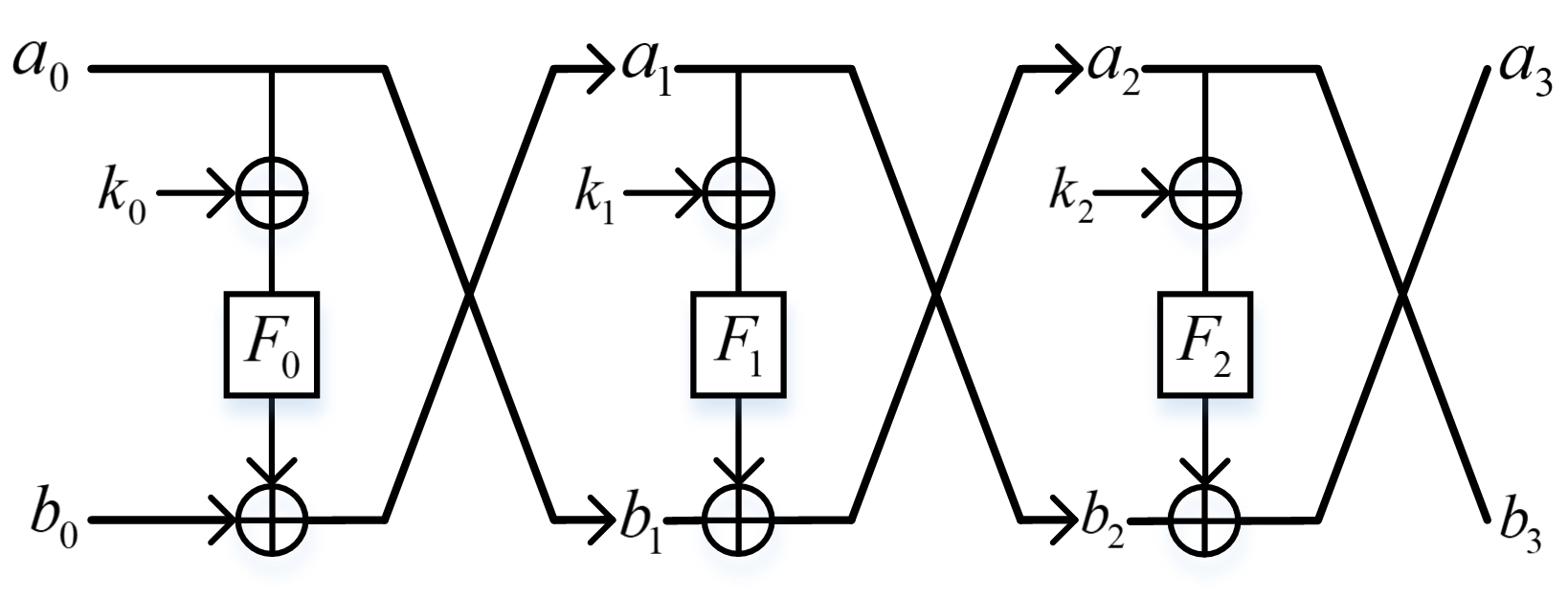}
    \caption{3-round Feistel-KF structure.}
    \label{fig6}
\end{figure}

Based on the 3-round Feistel-KF structure, it is easy to get
\[
\begin{aligned}
b_{3}&=F_{1}\left(k_{1} \oplus F_{0}\left(k_{0} \oplus a_{0}\right) \oplus b_{0}\right) \oplus a_{0},\\
a_{3}&=F_{0}(k_{0} \oplus a_{0}) \oplus b_{0} \oplus F_{2}(k_{2} \oplus F_{1}(k_{1} \oplus F_{0}(k_{0}\\  &\quad \oplus a_{0})\oplus b_{0}) \oplus a_{0}).
\end{aligned}
\]

First, querying the plaintext $(a_{0}, b_{0})=(0^{n / 2}, 0^{n / 2})$ to the encryption oracle to receive $(a_{3}, b_{3})$, where $b_{3}=F_{1}(k_{1} \oplus F_{0}(k_{0}))$. Assume $\beta_{1}=k_{1} \oplus F_{0}(k_{0})$, then $b_{3}=F_{1}(\beta_{1})$, and the value of $\beta_{1}$ can be obtained through Grover algorithm in $O(2^{n/2})$ time.

Then, querying $(a_{0}, b_{0})=(0 \cdots 01, \beta_{1})$ to the encryption oracle to get $(a_{3}, b_{3})$, where $
b_{3}=F_{1}(F_{0}(k_{0} \oplus 0 \cdots 01) \oplus F_{0}(k_{0})) \oplus 0 \cdots 01$. Let $\beta_{2}=F_{0}(k_{0} \oplus 0 \cdots 01) \oplus F_{0}(k_{0})$, then $b_{3}=F_{1}(\beta_{2}) \oplus 0 \cdots 01$, the value of $\beta_{2}$ can be obtained using Grover algorithm in $O(2^{n/2})$ time.

After getting the value of $\beta_{2}=F_{0}(k_{0} \oplus 0 \cdots 01) \oplus F_{0}(k_{0})$, $k_{0}$ can be obtained by Grover algorithm in $O(2^{n/2})$ time, and then $k_{1}=\beta_{1} \oplus F_{0}(k_{0})$.

Now, querying $(a_{0}, b_{0})=(F_{1}(k_{1}), F_{0}(k_{0} \oplus F_{1}(k_{1})))$ to obtain $a_{3}=F_{2}(k_{2})$, and $k_{2}$ can be obtained by Grover algorithm in $O(2^{n/2})$ time.

Finally, by choosing a constant number of plaintexts and querying the encryption oracle to obtain the corresponding ciphertexts, one can verify whether the plaintext-ciphertext pairs produced by the recovered keys \((k_0, k_1, k_2)\) are consistent with the 3-round Feistel-KF structure.

\section{Quantum attacks on FBC-F/KF/FK structures in the Q2 model}\label{sec3}

In this section, we present quantum key-recovery attacks on FBC-F, FBC-KF, and FBC-FK structures in the Q2 model. First, we construct 4-round quantum distinguishers for FBC-F and FBC-KF structures, and a 6-round quantum distinguisher for FBC-FK structure. Building on these results, we apply the Grover-meets-Simon algorithm to recover partial subkeys of the $r$-round encryption structures.

\par \subsection{Quantum attacks on FBC-F/KF structures}

Under the qCPA setting, we first construct a 4-round quantum distinguisher and give an $r$-round quantum key-recovery attack on FBC-F structure. The quantum attack on the FBC-KF structure is similar to that on the FBC-F structure. Both attacks have the same time complexity.

\textit{A1) Quantum distinguishers against the 4-round FBC-F/KF structures}

For the 4-round FBC-F structure, we design a new periodic function that allows Simon's algorithm to distinguish it from a random permutation in $O(n)$ time, where $n$ denotes the key length.

\begin{figure}
    \centering
    \includegraphics[width=0.7\linewidth]{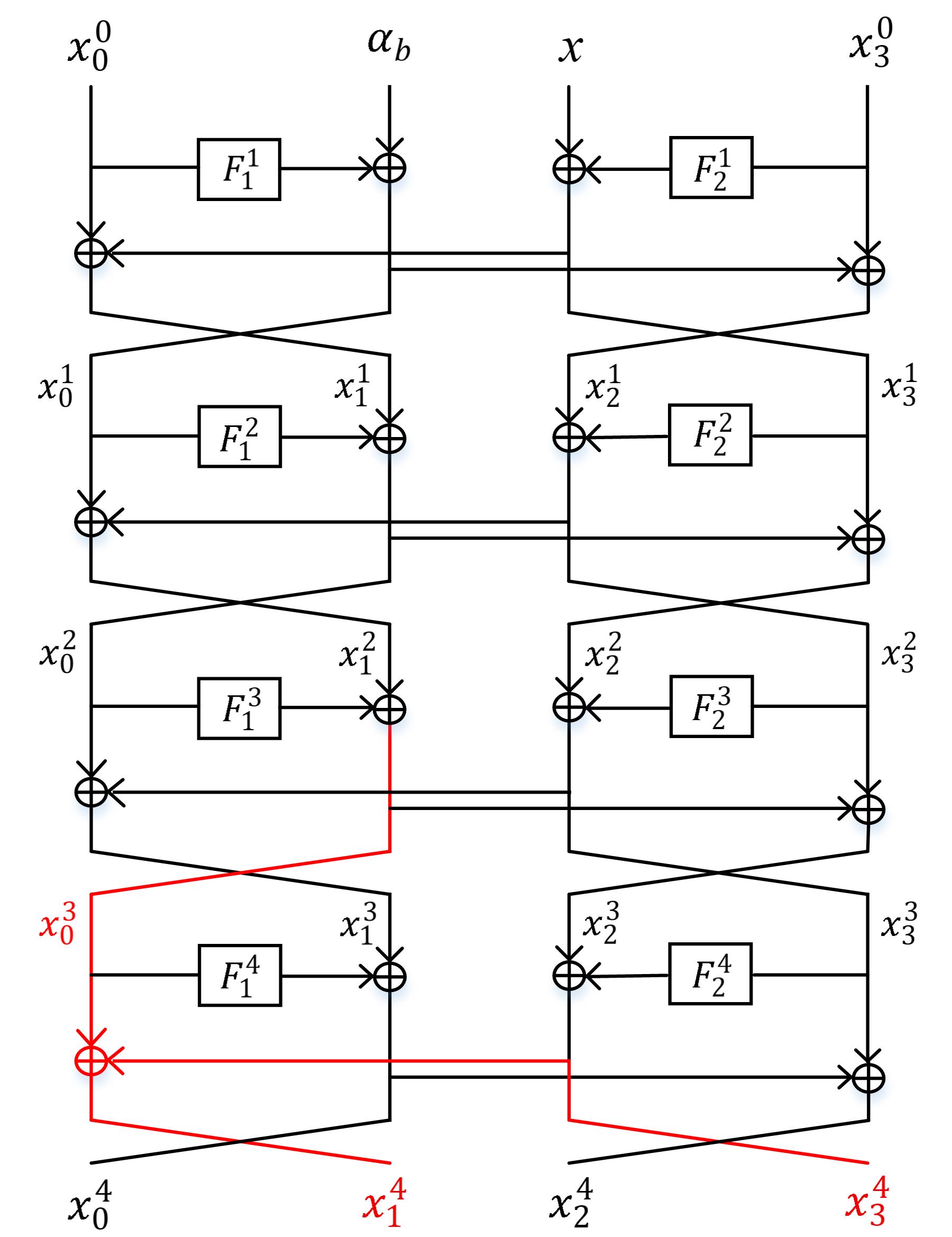}
    \caption{4-round FBC-F structure.}
    \label{fig7}
\end{figure}

Let $\alpha_{0}, \alpha_{1}, x_{0}^{0}, x_{3}^{0} \in \{0,1\}^{n}$ be arbitrary constants, where $\alpha_{0}\neq \alpha_{1}$. Given a 4-round FBC-F encryption oracle $\mathcal{O}_{1}$, we can define
\[
    \begin{aligned}
        f^{\mathcal{O}_{1}}:\{0,1\}^{n} &\rightarrow\{0,1\}^{n} \\
x &\mapsto x_{1}^{4}(\alpha_{0}) \oplus x_{3}^{4}(\alpha_{0}) \oplus x_{1}^{4}(\alpha_{1}) \oplus x_{3}^{4}(\alpha_{1}).
    \end{aligned}
\]
Here, \( x_i^j(\alpha_b) \) denotes the output of the \((i+1)\)-th branch in the \( j \)-th round of the FBC-F structure when the input plaintext is \( (x_0^0, \alpha_b, x, x_3^0) \), where \( i \in \{0,1,2,3\} \), \( j \in \{1,2,3,4\} \), and \( b \in \{0,1\} \).

From the 4-round FBC-F structure (see Fig. \ref{fig7}), we obtain
\begin{equation}
    \begin{aligned}
f^{\mathcal{O}_{1}}(x)
& =x_{1}^{4}(\alpha_{0}) \oplus x_{3}^{4}(\alpha_{0}) \oplus x_{1}^{4}(\alpha_{1}) \oplus x_{3}^{4}(\alpha_{1})\\
& =x_{0}^{3}(\alpha_{0}) \oplus x_{0}^{3}(\alpha_{1}) .
\end{aligned}
\end{equation}
Next, we have the following lemma.

\begin{Lemma}
If $\mathcal{O}_1$ is the encryption algorithm of $4$-round FBC-F structure, for any $x \in \{0,1\}^n$, the function $f^{\mathcal{O}_1}$ satisfies
\[f^{\mathcal{O}_1}(x)=f^{\mathcal{O}_1}(x \oplus F_{1}^{2}(F_{1}^{1}(x_{0}^{0}) \oplus \alpha_{0}) \oplus F_{1}^{2}(F_{1}^{1}(x_{0}^{0}) \oplus \alpha_{1})). \]
That is, $f^{\mathcal{O}_1}$ has the period $s=F_{1}^{2}(F_{1}^{1}(x_{0}^{0}) \oplus \alpha_{0}) \oplus F_{1}^{2}(F_{1}^{1}(x_{0}^{0}) \oplus \alpha_{1})$.
\end{Lemma}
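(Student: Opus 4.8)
The plan is to start from Equation~(1), which already reduces the target to $f^{\mathcal{O}_1}(x)=x_0^3(\alpha_0)\oplus x_0^3(\alpha_1)$, and then to compute $x_0^3(\alpha_b)$ explicitly as a function of the varying input $x$ (the third branch) and the parameter $\alpha_b$ (the second branch). First I would push the fixed plaintext $(x_0^0,\alpha_b,x,x_3^0)$ through three applications of the round map $(x_0^i,x_1^i,x_2^i,x_3^i)=(x_1^{i-1}\oplus F_1^i(x_0^{i-1}),\,x_0^{i-1}\oplus x_2^{i-1}\oplus F_2^i(x_3^{i-1}),\,x_3^{i-1}\oplus x_1^{i-1}\oplus F_1^i(x_0^{i-1}),\,x_2^{i-1}\oplus F_2^i(x_3^{i-1}))$, keeping careful track of which branch values carry a dependence on $\alpha_b$. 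The goal of this bookkeeping is to isolate the single place where $\alpha_b$ enters the expression for $x_0^3$.

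The key structural observation to extract from the round-by-round computation is that after round~1 the $\alpha_b$-dependence sits in $x_0^1=\alpha_b\oplus F_1^1(x_0^0)$ and $x_2^1=x_3^0\oplus\alpha_b\oplus F_1^1(x_0^0)$, and after round~2 it collapses in a useful way: the cross term $x_0^1\oplus x_2^1=x_3^0$ is $\alpha_b$-free, so $x_1^2=x_3^0\oplus F_2^2(x\oplus F_2^1(x_3^0))$ is \emph{independent} of $\alpha_b$, whereas $x_0^2=x_0^0\oplus x\oplus F_2^1(x_3^0)\oplus F_1^2(\alpha_b\oplus F_1^1(x_0^0))$ carries the only surviving $\alpha_b$-term through $F_1^2$. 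Then, writing $x_0^3=x_1^2\oplus F_1^3(x_0^2)$, the $\alpha_b$-free part $x_1^2$ cancels in the XOR $x_0^3(\alpha_0)\oplus x_0^3(\alpha_1)$, leaving
\[
f^{\mathcal{O}_1}(x)=F_1^3\!\left(c\oplus x\oplus d_0\right)\oplus F_1^3\!\left(c\oplus x\oplus d_1\right),
\]
where $c=x_0^0\oplus F_2^1(x_3^0)$ is a constant and $d_b=F_1^2(F_1^1(x_0^0)\oplus\alpha_b)$.

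With $f^{\mathcal{O}_1}$ in this two-term form, the period claim follows immediately: setting $s=d_0\oplus d_1=F_1^2(F_1^1(x_0^0)\oplus\alpha_0)\oplus F_1^2(F_1^1(x_0^0)\oplus\alpha_1)$, the substitution $x\mapsto x\oplus s$ swaps the two arguments $c\oplus x\oplus d_0$ and $c\oplus x\oplus d_1$, so by commutativity of XOR the sum is unchanged, i.e.\ $f^{\mathcal{O}_1}(x\oplus s)=f^{\mathcal{O}_1}(x)$. I would also note that $\alpha_0\neq\alpha_1$ guarantees the two function arguments are distinct, so the period is the intended nontrivial value (whenever $F_1^2$ separates the two inputs, as is assumed for the distinguisher).

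The main obstacle is purely the careful four-branch propagation through rounds~1--3; the cancellation of $x_1^2$'s $\alpha_b$-dependence is the one nonobvious step that makes the whole argument clean, and once the expression is in the two-term $F_1^3$ form the period verification is a one-line swap. I expect no genuine difficulty beyond this tracking, and would double-check the round-2 computation of $x_1^2$ and $x_0^2$ most carefully, since that is where the $\alpha_b$-terms either cancel or pass through $F_1^2$.
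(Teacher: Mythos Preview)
Your proposal is correct and follows essentially the same route as the paper: both compute the round-1 and round-2 outputs for input $(x_0^0,\alpha_b,x,x_3^0)$, observe that $x_1^2$ is $\alpha_b$-free while $x_0^2$ carries $\alpha_b$ only through $F_1^2(\alpha_b\oplus F_1^1(x_0^0))$, write $x_0^3=x_1^2\oplus F_1^3(x_0^2)$, and then verify the period by the two-term swap. Your explicit remark that $x_0^1\oplus x_2^1=x_3^0$ is the reason $x_1^2$ loses its $\alpha_b$-dependence is a nice clarification that the paper leaves implicit in its direct computation.
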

\begin{proof}
By choosing the plaintext $(x_{0}^{0}, \alpha_{b}, x, x_{3}^{0})$ as the input of the FBC-F structure, the output of the first round is
\begin{equation}\label{Eq2}
\begin{aligned}
(x_{0}^{1}, x_{1}^{1}, x_{2}^{1}, x_{3}^{1})=&(F_{1}^{1}(x_{0}^{0}) \oplus \alpha_{b}, x_{0}^{0} \oplus x \oplus F_{2}^{1}(x_{3}^{0}), x_{3}^{0} \oplus\\ &F_{1}^{1}(x_{0}^{0}) \oplus \alpha_{b}, x \oplus F_{2}^{1}(x_{3}^{0})).
\end{aligned}
\end{equation}
Similarly, the output of the second round $(x_{0}^{2}, x_{1}^{2}, x_{2}^{2}, x_{3}^{2})$ can be obtained, i.e.,
\begin{equation}\label{Eq3}
  \begin{aligned}
x_{0}^{2}&=F_{1}^{2}(F_{1}^{1}(x_{0}^{0}) \oplus \alpha_{b}) \oplus x \oplus x_{0}^{0} \oplus F_{2}^{1}(x_{3}^{0}), \\
x_{1}^{2}&=x_{3}^{0} \oplus F_{2}^{2}(x \oplus F_{2}^{1}(x_{3}^{0})), \\
x_{2}^{2}&=x_{0}^{0} \oplus F_{1}^{2}(\alpha_{b} \oplus F_{1}^{1}( x_{0}^{0})), \\
x_{3}^{2}&=x_{3}^{0} \oplus F_{1}^{1}(x_{0}^{0}) \oplus \alpha_{b} \oplus F_{2}^{2}(x \oplus F_{2}^{1}(x_{3}^{0})).
\end{aligned}
\end{equation}
As illustrated by the red line in Fig. \ref{fig7}, using the ciphertexts \( x_{1}^4 \) and \( x_{3}^4 \) obtained from the encryption oracle $\mathcal{O}_1$ of the 4-round FBC-F structure, we can derive the value of \( x_{0}^3 \) based on the relation \( x_{1}^4  \oplus x_{3}^4 = x_{0}^3 \).
On the other hand, according to the encryption structure of FBC-F, the intermediate value $x_0^3$ can be expressed as
\begin{equation}\label{Eq4}
\begin{aligned}
x_{0}^{3}=
F_{1}^{3}(x_{0}^{2}) \oplus x_{1}^{2}.
\end{aligned}
\end{equation}
Therefore, it can be seen from Eqs. (\ref{Eq3}) and (\ref{Eq4}) that
\[
\begin{aligned}
x_{0}^{3}=&x_{3}^{0} \oplus F_{2}^{2}(x \oplus F_{2}^{1}(x_{3}^{0})) \oplus F_{1}^{3}(F_{1}^{2}(F_{1}^{1}(x_{0}^{0}) \oplus \alpha_{b})\\ & \oplus x \oplus x_{0}^{0} \oplus F_{2}^{1}(x_{3}^{0})).
\end{aligned}
\]
Hence, the function $f^{\mathcal{O}_1}$ can be described as
\[
    \begin{aligned}
f^{\mathcal{O}_1}(x)&=x_{0}^{3}(\alpha_{0}) \oplus x_{0}^{3}(\alpha_{1}) \\
&=F_{1}^{3}(F_{1}^{2}(F_{1}^{1}(x_{0}^{0}) \oplus \alpha_{0}) \oplus x \oplus x_{0}^{0} \oplus F_{2}^{1}(x_{3}^{0}))\\&\quad \oplus F_{1}^{3}(F_{1}^{2}(F_{1}^{1}(x_{0}^{0}) \oplus \alpha_{1}) \oplus x \oplus x_{0}^{0} \oplus F_{2}^{1}(x_{3}^{0})).
\end{aligned}
\]
The function $f^{\mathcal{O}_1}$ has a period $s$ since it satisfies
\[
\begin{aligned}
f&^{\mathcal{O}_1}(x \oplus F_{1}^{2}(F_{1}^{1}(x_{0}^{0}) \oplus \alpha_{0}) \oplus F_{1}^{2}(F_{1}^{1}(x_{0}^{0}) \oplus \alpha_{1}) )\\
&=F_{1}^{3}(\textcolor{red}{F_{1}^{2}(F_{1}^{1}(x_{0}^{0}) \oplus \alpha_{0})} \oplus x \oplus \textcolor{red}{F_{1}^{2}(F_{1}^{1}(x_{0}^{0}) \oplus \alpha_{0})} \oplus\\
&\quad F_{1}^{2}(F_{1}^{1}(x_{0}^{0}) \oplus \alpha_{1}) \oplus x_{0}^{0} \oplus F_{2}^{1}(x_{3}^{0})) \oplus F_{1}^{3}(\textcolor{blue}{F_{1}^{2}(F_{1}^{1}(x_{0}^{0})}\\
&\quad \textcolor{blue}{\oplus \alpha_{1})} \oplus x \oplus F_{1}^{2}(F_{1}^{1}(x_{0}^{0}) \oplus \alpha_{0}) \oplus \textcolor{blue}{F_{1}^{2}(F_{1}^{1}(x_{0}^{0}) \oplus \alpha_{1})} \oplus \\
&\quad x_{0}^{0} \oplus F_{2}^{1}(x_{3}^{0})) \\
&=F_{1}^{3}(x \oplus F_{1}^{2}(F_{1}^{1}(x_{0}^{0}) \oplus \alpha_{1}) \oplus x_{0}^{0} \oplus F_{2}^{1}(x_{3}^{0})) \oplus F_{1}^{3}(x\\
&\quad \oplus F_{1}^{2}(F_{1}^{1}(x_{0}^{0}) \oplus \alpha_{0}) \oplus x_{0}^{0} \oplus F_{2}^{1}(x_{3}^{0})) \\
&=f^{\mathcal{O}_1}(x),
\end{aligned}
\]
where $s=F_{1}^{2}(F_{1}^{1}(x_{0}^{0}) \oplus \alpha_{0}) \oplus F_{1}^{2}(F_{1}^{1}(x_{0}^{0}) \oplus \alpha_{1}).$ The proof of the above lemma holds.
\end{proof}

Therefore, we can construct a distinguisher for the 4-round FBC-F structure by using the function $f^{\mathcal{O}_1}$. Moreover, the Simon algorithm introduced in Sect. \ref{sec2} can efficiently find this period $s$ in $O(n)$ time.

In the FBC-KF structure, a 4-round quantum distinguisher can also be constructed using a similar approach. The detailed procedure is provided in Appendix \ref{A}.

\textit{A2) Quantum key-recovery attacks on FBC-F/KF structures}

Here, we design a quantum key-recovery attack on FBC-F structure using Grover-meets-Simon algorithm. We extend the $4$-round FBC-F structure to 6 rounds, as shown in Fig. \ref{fig8}. To apply the 4-round quantum distinguisher for the FBC-F structure constructed in the previous section, we need to obtain the values of \(x_1^4\) and \(x_3^4\).

First, we query the encryption oracle $\mathcal{O}_2$ of the 6-round FBC-F structure with the plaintext $(x_{0}^{0}, \alpha_{b}, x, x_{3}^{0})$ to obtain the ciphertext $(x_{0}^{6}, x_{1}^{6}, x_{2}^{6}, x_{3}^{6})$. In Fig. \ref{fig8}, the decryption paths of $x_{1}^{4}$ and $x_{3}^{4}$ are highlighted, where the red line corresponds to the path of $x_{1}^{4}$ and the blue line corresponds to that of $x_{3}^{4}$. We can calculate

\begin{equation}\label{Eq5}
\begin{cases}
\begin{aligned}
x_{1}^{4}&=x_{0}^{5} \oplus F_{1}^{5}(x_{1}^{5} \oplus x_{3}^{5})\\
x_{3}^{4}&=x_{0}^{5} \oplus x_{2}^{5}
\end{aligned}
\end{cases}
\end{equation}
and
\begin{equation}\label{Eq6}
\begin{cases}
\begin{aligned}
x_{0}^{5}&=x_{1}^{6} \oplus x_{3}^{6} \\
x_{1}^{5}&=x_{0}^{6} \oplus F_{1}^{6}(x_{1}^{6} \oplus x_{3}^{6})\\
x_{2}^{5}&=x_{3}^{6} \oplus F_{2}^{6}(x_{0}^{6} \oplus x_{2}^{6})\\
x_{3}^{5}&=x_{0}^{6} \oplus x_{2}^{6}
\end{aligned}.
\end{cases}
\end{equation}

\begin{figure}[t]
    \centering
    \includegraphics[width=0.7\linewidth]{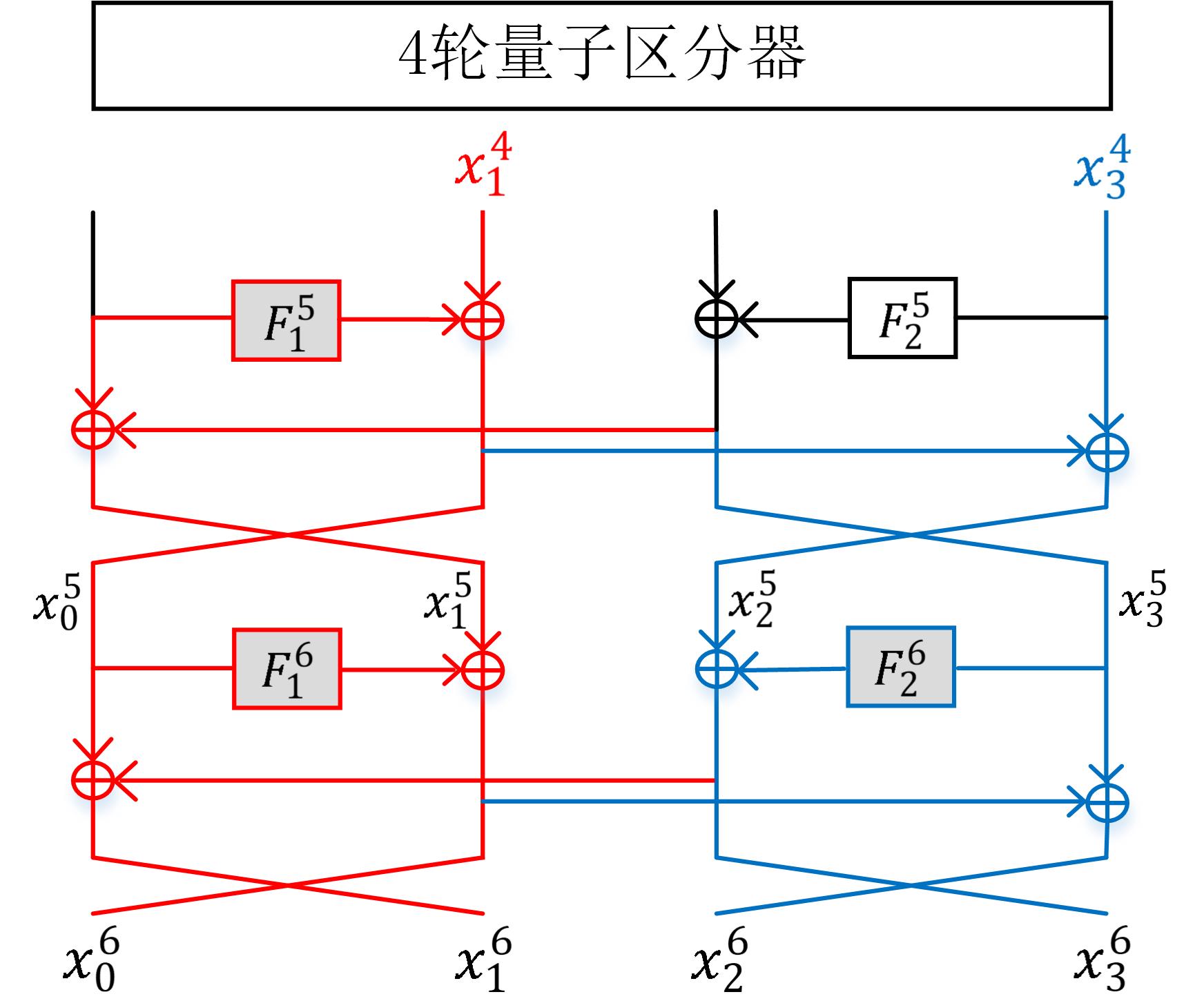}
    \caption{6-round quantum key-recovery attack on FBC-F structure.}
    \label{fig8}
\end{figure}

Second, according to Eqs. (\ref{Eq5}) and (\ref{Eq6}), we need to guess the subkeys $k_{1}^{5 }, k_{1}^{6 } \text { and } k_{2}^{6}$ to calculate $x_{1}^{4}$ and $x_{3}^{4}$. For guessed subkeys, we apply the 4-round distinguisher to check its correctness. Specifically, the adversary can obtain a periodic function only if the guess is correct, which can then be detected using the Simon algorithm.

Attacking the 6-round FBC-F structure based on the 4-round quantum distinguisher requires guessing $3n$ bits of subkeys using the Grover algorithm, and the time complexity is $O(2^{3n/2})$. For $r>6$ rounds, using the Grover algorithm requires guessing $2n(r-6)+3n$ bits of subkeys, and the time complexity is $O(2^{(2n(r-6)+3n)/2})$, which is a factor of $2^{4.5n}$ lower than that of quantum brute-force search.

Similarly, the quantum key-recovery attack on FBC-KF structure is the same as that on FBC-F structure, with the same time complexity.

\par \subsection{Quantum attack on FBC-FK structure}
Here, we construct a 6-round quantum distinguisher for the FBC-FK structure and perform an $r(r>6)$-round quantum key-recovery attack.

\textit{B1) Quantum distinguisher against the 6-round FBC-FK structure}

To construct the 6-round quantum distinguisher, we choose the plaintext input to the FBC-FK structure as $(\alpha _{b} \oplus x_3^0,F_1^1 (\alpha _{b}\oplus x_3^0)\oplus x_0^0,x_3^0\oplus F_2^1 (x_0^0\oplus x),x_0^0\oplus x)$, where  $b\in \{0,1\}$, $x\in \{0,1\}^n$, and $\alpha_{0}$, $\alpha_{1}$, $x_{0}^{0}$, $x_{3}^{0}$ are constants such that $\alpha_{0}\neq  \alpha_{1}$.

\begin{figure}
    \centering
    \includegraphics[width=0.7\linewidth]{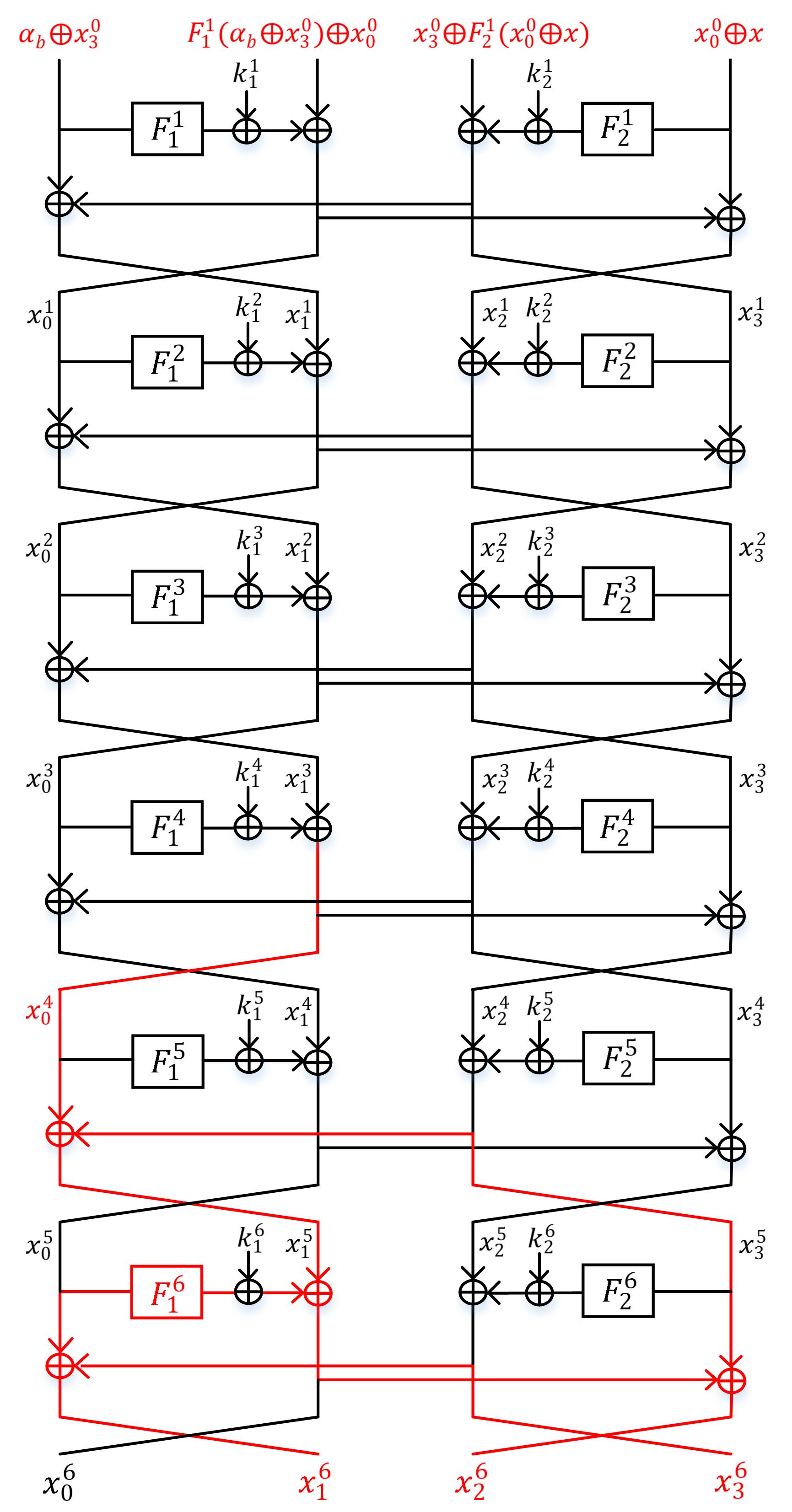}
    \caption{6-round FBC-FK structure.}
    \label{fig9}
\end{figure}

Based on this plaintext, we obtain the ciphertext $(x_0^6,x_1^6,x_2^6,x_3^6)$ through the 6-round FBC-FK encryption oracle $\mathcal{O}_3$ as shown in Fig. \ref{fig9}. Here, we define
\[
\begin{aligned}
f^{\mathcal{O}_3}:\{0,1\}^{n} & \rightarrow\{0,1\}^{n} \\
x & \mapsto\left[F_{1}^{6}\left(x_{1}^{6} \oplus x_{3}^{6}\right) \oplus x_{2}^{6}\right]\left(\alpha_{0}\right) \\
&\quad\quad \oplus \left[F_{1}^{6}\left(x_{1}^{6} \oplus x_{3}^{6}\right) \oplus x_{2}^{6}\right]\left(\alpha_{1}\right),
\end{aligned}
\]
where the expression
$ \left[F_{1}^{6}(x_{1}^{6} \oplus x_{3}^{6}) \oplus x_{2}^{6} \right](\alpha_b) $
denotes the value obtained by certain outputs from the $6$-th round of the FBC-FK structure, given the plaintext input $(\alpha _{b} \oplus x_3^0,F_1^1 (\alpha _{b}\oplus x_3^0)\oplus x_0^0,x_3^0\oplus F_2^1 (x_0^0\oplus x),x_0^0\oplus x)$ for $b\in \{0,1\}$. The functions \( F_i^j \) are publicly known for all \( i \in \{1, 2\} \) and \( j \in \{1, 2, 3, 4, 5, 6\} \).

Now, we give the following \autoref{Lemma2}.
\begin{Lemma} \label{Lemma2}
If $\mathcal{O}_3$ is the encryption algorithm of $6$-round FBC-FK structure, then for any $x \in \{0,1\}^n$, the function $f^{\mathcal{O}_3}$ satisfies
\begin{align*}
f^{\mathcal{O}_3}(x)&=f^{\mathcal{O}_3}(x \oplus F_{1}^{3}(\alpha_{0} \oplus k_{2}^{1} \oplus k_{1}^{2} \oplus F_{1}^{2}(x_{0}^{0} \oplus k_{1}^{1}))\\
 &\quad \oplus F_{1}^{3}(\alpha_{1} \oplus k_{2}^{1} \oplus k_{1}^{2} \oplus F_{1}^{2}(x_{0}^{0} \oplus k_{1}^{1}))).
 \end{align*}
That is, $f^{\mathcal{O}_3}$ has the period $s=F_{1}^{3}(\alpha_{0} \oplus k_{2}^{1} \oplus k_{1}^{2} \oplus F_{1}^{2}(x_{0}^{0} \oplus k_{1}^{1})) \oplus F_{1}^{3}(\alpha_{1} \oplus k_{2}^{1} \oplus k_{1}^{2} \oplus F_{1}^{2}(x_{0}^{0} \oplus k_{1}^{1})).$
\end{Lemma}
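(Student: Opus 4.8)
The plan is to evaluate $f^{\mathcal{O}_3}$ by propagating the chosen plaintext forward through the early rounds while simultaneously peeling rounds~5 and~6 off the ciphertext side, and to show that the two computations meet at the single intermediate value $x_0^4$. The defining expression of $f^{\mathcal{O}_3}$ is engineered precisely so that the backward simplification is key-free: from the FBC-FK round equations one checks the identity $x_1^{i}\oplus x_3^{i}=x_0^{i-1}$, which holds independently of the subkeys. Applying it at $i=6$ gives $x_1^6\oplus x_3^6=x_0^5$, and substituting the round-6 formula $x_2^6=x_3^5\oplus x_1^5\oplus k_1^6\oplus F_1^6(x_0^5)$ collapses the bracketed quantity to $F_1^6(x_1^6\oplus x_3^6)\oplus x_2^6=x_1^5\oplus x_3^5\oplus k_1^6$; a second use of the identity at $i=5$ turns $x_1^5\oplus x_3^5$ into $x_0^4$. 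Hence $[F_1^6(x_1^6\oplus x_3^6)\oplus x_2^6](\alpha_b)=x_0^4(\alpha_b)\oplus k_1^6$, the constant $k_1^6$ cancels in the difference, and the target reduces to
\[
f^{\mathcal{O}_3}(x)=x_0^4(\alpha_0)\oplus x_0^4(\alpha_1).
\]

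Next I would push the chosen plaintext $(\alpha_b\oplus x_3^0,\,F_1^1(\alpha_b\oplus x_3^0)\oplus x_0^0,\,x_3^0\oplus F_2^1(x_0^0\oplus x),\,x_0^0\oplus x)$ through round~1. The plaintext is tailored so that each first-round $F$-call cancels its pre-injected copy, and the round-1 output simplifies to $(x_0^0\oplus k_1^1,\,\alpha_b\oplus k_2^1,\,x\oplus k_1^1,\,x_3^0\oplus k_2^1)$, cleanly routing $\alpha_b$ into the second branch and $x$ into the third. Carrying this through round~2 I expect $x_0^2=\alpha_b\oplus A$ with the $x$-independent constant $A=k_2^1\oplus k_1^2\oplus F_1^2(x_0^0\oplus k_1^1)$, so $x_0^2$ carries $\alpha_b$ but not $x$, while $x_3^2$ carries $x$ but not $\alpha_b$; a short check also shows $x_0^2\oplus x_2^2$ is constant, which keeps $\alpha_b$ out of $x_1^3$. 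Consequently the only channel through which $\alpha_b$ can re-enter the computation of $x_0^4$ is the term $F_1^3(x_0^2)=F_1^3(\alpha_b\oplus A)$, which is exactly the ingredient appearing in the announced period.

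Expanding $x_0^4=x_1^3\oplus k_1^4\oplus F_1^4(x_0^3)$ and inserting the round-3 values, every summand is $\alpha_b$-independent except the single term $F_1^4\bigl(x\oplus C\oplus F_1^3(\alpha_b\oplus A)\bigr)$ for a suitable constant $C$ independent of both $x$ and $\alpha_b$. Writing $u_b=F_1^3(\alpha_b\oplus A)$, this yields
\[
f^{\mathcal{O}_3}(x)=F_1^4(x\oplus C\oplus u_0)\oplus F_1^4(x\oplus C\oplus u_1),
\]
where $s=u_0\oplus u_1$ is precisely the claimed period. Periodicity is then a one-line swap: since $s\oplus u_0=u_1$ and $s\oplus u_1=u_0$, replacing $x$ by $x\oplus s$ merely interchanges the two arguments of $F_1^4$, so $f^{\mathcal{O}_3}(x\oplus s)=f^{\mathcal{O}_3}(x)$.

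I expect the main obstacle to be purely the bookkeeping in the forward pass, namely tracking through rounds~1--4 which of the many XOR-ed terms depend on $\alpha_b$, on $x$, or on neither, and verifying that all $\alpha_b$-dependence funnels into the single $F_1^3$ call while $x$ enters only linearly. Once that isolation is secured, both the key-free collapse of the ciphertext-side bracket and the concluding swap argument are immediate.
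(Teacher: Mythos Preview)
Your proposal is correct and follows essentially the same route as the paper: reduce the ciphertext-side bracket to $x_0^4\oplus k_1^6$ via the identity $x_1^i\oplus x_3^i=x_0^{i-1}$, cancel $k_1^6$ in the difference, then propagate the chosen plaintext forward through rounds~1--4 to reach the form $F_1^4(x\oplus C\oplus u_0)\oplus F_1^4(x\oplus C\oplus u_1)$ and conclude by the swap. Your presentation is somewhat cleaner in that you track $\alpha_b$- and $x$-dependence structurally rather than writing out every intermediate coordinate as the paper does, but the argument is the same.
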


\begin{proof}
 Choosing the plaintext $(\alpha_{b} \oplus x_{3}^{0}, F_{1}^{1}(\alpha_{b} \oplus x_{3}^{0}) \oplus x_{0}^{0}, x_{3}^{0} \oplus F_{2}^{1}(x_{0}^{0} \oplus x), x_{0}^{0} \oplus x)$ as the input of the 6-round FBC-FK structure (see Fig. \ref{fig9}), we find that the output of the first round is
\begin{equation}
\begin{aligned}
(x_{0}^{1}, x_{1}^{1}, x_{2}^{1}, x_{3}^{1})
& =(x_{0}^{0} \oplus k_{1}^{1}, \alpha_{b} \oplus k_{2}^{1}, x \oplus k_{1}^{1}, x_{3}^{0} \oplus k_{2}^{1}).
\end{aligned}
\end{equation}
Next, we obtain the output of the second round as $(x_0^2,x_1^2,x_2^2,x_3^2)$, where
\begin{equation}
  \begin{aligned}
x_{0}^{2}&=\alpha_{b} \oplus k_{2}^{1} \oplus k_{1}^{2} \oplus F_{1}^{2}(x_{0}^{0} \oplus k_{1}^{1}), \\
x_{1}^{2}&=x \oplus x_{0}^{0} \oplus k_{2}^{2} \oplus F_{2}^{2}(x_{3}^{0} \oplus k_{2}^{1}), \\
x_{2}^{2}&=\alpha_{b} \oplus x_{3}^{0} \oplus k_{1}^{2} \oplus F_{1}^{2}(x_{0}^{0} \oplus k_{1}^{1}), \\
x_{3}^{2}&=x \oplus k_{1}^{1} \oplus k_{2}^{2} \oplus F_{2}^{2}(x_{3}^{0} \oplus k_{2}^{1}).
\end{aligned}
\end{equation}
Similarly, the third round output $(x_0^3,x_1^3,x_2^3,x_3^3)$ can be expressed as
\begin{equation}
    \begin{aligned}
x_{0}^{3}&=x \oplus x_{0}^{0} \oplus k_{2}^{2} \oplus k_{1}^{3} \oplus F_{2}^{2}(x_{3}^{0} \oplus k_{2}^{1})\\
 &\quad \oplus F_{1}^{3}(\alpha_{b} \oplus k_{2}^{1} \oplus k_{1}^{2} \oplus F_{1}^{2}(x_{0}^{0} \oplus k_{1}^{1})), \\
x_{1}^{3}&=x_{3}^{0} \oplus k_{2}^{1} \oplus k_{2}^{3} \oplus F_{2}^{3}(x \oplus k_{1}^{1} \oplus k_{2}^{2} \oplus F_{2}^{2}(x_{3}^{0} \oplus k_{2}^{1})), \\
x_{2}^{3}&=x_{0}^{0} \oplus k_{1}^{1} \oplus k_{1}^{3} \oplus F_{1}^{3}(\alpha_{b} \oplus k_{2}^{1} \oplus k_{1}^{2} \oplus F_{1}^{2}(x_{0}^{0} \oplus k_{1}^{1})), \\
x_{3}^{3}&=\alpha_{b} \oplus x_{3}^{0} \oplus k_{1}^{2} \oplus k_{2}^{3} \oplus F_{1}^{2}(x_{0}^{0} \oplus k_{1}^{1})\\
 &\quad\oplus F_{2}^{3}(x \oplus k_{1}^{1} \oplus k_{2}^{2} \oplus F_{2}^{2}(x_{3}^{0} \oplus k_{2}^{1})) .
\end{aligned}
\end{equation}
Consider the intermediate state $x_0^4$, where $x_0^4=x_1^3\oplus k_1^4\oplus F_1^4 (x_0^3)$. We can then get
\begin{equation}\label{Eq11}
    \begin{aligned}
x_{0}^{4}= & x_{3}^{0} \oplus k_{2}^{1} \oplus k_{2}^{3} \oplus k_{1}^{4}
\oplus F_{2}^{3}(x \oplus k_{1}^{1} \oplus k_{2}^{2} \oplus F_{2}^{2}(x_{3}^{0} \\ &\oplus k_{2}^{1}))
\oplus F_{1}^{4}(x \oplus x_{0}^{0} \oplus k_{2}^{2} \oplus k_{1}^{3}
\oplus F_{2}^{2}(x_{3}^{0} \oplus k_{2}^{1}) \\ &\oplus F_{1}^{3}(\alpha_{b} \oplus k_{2}^{1} \oplus k_{1}^{2}
\oplus F_{1}^{2}(x_{0}^{0} \oplus k_{1}^{1}))).
\end{aligned}
\end{equation}
From Fig. \ref{fig9} (red line), $x_0^4$ can also be described as
\[
    \begin{aligned}
x_{0}^{4}=x_{1}^{5}\oplus x_{3}^{5}=F_{1}^{6}(x_{1}^{6}\oplus x_{3}^{6})\oplus k_{1}^{6} \oplus x_{2}^{6},
\end{aligned}
\]
so we can get
\begin{equation}\label{Eq12}
    \begin{aligned}
x_{0}^{4}\oplus k_{1}^{6}=F_{1}^{6}(x_{1}^{6}\oplus x_{3}^{6})\oplus x_{2}^{6}.
\end{aligned}
\end{equation}
In particular, from Eqs.  (\ref{Eq11}) and (\ref{Eq12}), we have
\begin{equation}
    \begin{aligned}
f^{\mathcal{O}_3}&=
[F_{1}^{6}(x_{1}^{6} \oplus x_{3}^{6}) \oplus x_{2}^{6}](\alpha_{0})
\oplus [F_{1}^{6}(x_{1}^{6} \oplus x_{3}^{6})\oplus x_{2}^{6}](\alpha_{1}) \\
&= x_{0}^{4}(\alpha_{0}) \oplus x_{0}^{4}(\alpha_{1}) \\
&= F_{1}^{4}(x \oplus x_{0}^{0} \oplus k_{2}^{2} \oplus k_{1}^{3}
\oplus F_{2}^{2}(x_{3}^{0} \oplus k_{2}^{1})\oplus F_{1}^{3}(\alpha_{0} \oplus  k_{2}^{1}\\
&\hspace{0.4cm}
 \oplus k_{1}^{2}
\oplus F_{1}^{2}(x_{0}^{0} \oplus k_{1}^{1})))\oplus
F_{1}^{4}(x \oplus x_{0}^{0} \oplus k_{2}^{2}\oplus k_{1}^{3}\oplus \\
&\hspace{0.4cm}
F_{2}^{2}(x_{3}^{0} \oplus k_{2}^{1})\oplus F_{1}^{3}(\alpha_{1} \oplus k_{2}^{1} \oplus k_{1}^{2}
\oplus F_{1}^{2}(x_{0}^{0} \oplus k_{1}^{1}))).
\end{aligned}
\end{equation}
The function $f^{\mathcal{O}_3}$ has a period because it satisfies
\[
 \begin{aligned}
f^{\mathcal{O}_3}&(x \oplus F_{1}^{3}(\alpha_{0} \oplus k_{2}^{1} \oplus k_{1}^{2} \oplus F_{1}^{2}(x_{0}^{0} \oplus k_{1}^{1}))\\
&\hspace{2cm}\oplus F_{1}^{3}(\alpha_{1} \oplus k_{2}^{1} \oplus k_{1}^{2} \oplus F_{1}^{2}(x_{0}^{0} \oplus k_{1}^{1})))\\
&= F_{1}^{4}(x \oplus \textcolor{red}{F_{1}^{3}(\alpha_{0} \oplus k_{2}^{1} \oplus k_{1}^{2} \oplus F_{1}^{2}(x_{0}^{0} \oplus k_{1}^{1}))}\oplus F_{1}^{3}(\alpha_{1}\\
&\hspace{0.4cm} \oplus k_{2}^{1} \oplus k_{1}^{2} \oplus F_{1}^{2}(x_{0}^{0} \oplus k_{1}^{1})) \oplus x_{0}^{0} \oplus k_{2}^{2}\oplus k_{1}^{3} \oplus F_{2}^{2}(x_{3}^{0}\\
&\hspace{0.4cm}\oplus k_{2}^{1})
\oplus \textcolor{red}{F_{1}^{3}(\alpha_{0} \oplus k_{2}^{1} \oplus k_{1}^{2} \oplus F_{1}^{2}(x_{0}^{0} \oplus k_{1}^{1}))})\oplus F_{1}^{4}(x\\
&\hspace{0.4cm} \oplus F_{1}^{3}(\alpha_{0} \oplus k_{2}^{1} \oplus k_{1}^{2}  \oplus F_{1}^{2}(x_{0}^{0} \oplus k_{1}^{1}))
\oplus \textcolor{blue}{F_{1}^{3}(\alpha_{1}\oplus k_{2}^{1}}\\
&\hspace{0.5cm}\textcolor{blue}{ \oplus k_{1}^{2} \oplus  F_{1}^{2}(x_{0}^{0} \oplus k_{1}^{1})) }
\oplus x_{0}^{0}  \oplus k_{2}^{2} \oplus k_{1}^{3} \oplus F_{2}^{2}(x_{3}^{0} \oplus k_{2}^{1})\\
&\hspace{0.4cm}
\oplus \textcolor{blue}{F_{1}^{3}(\alpha_{1} \oplus k_{2}^{1} \oplus k_{1}^{2} \oplus F_{1}^{2}(x_{0}^{0}\oplus k_{1}^{1}))}) \\
&= F_{1}^{4}(x \oplus F_{1}^{3}(\alpha_{1} \oplus k_{2}^{1} \oplus k_{1}^{2} \oplus F_{1}^{2}(x_{0}^{0} \oplus k_{1}^{1}))
\oplus x_{0}^{0} \oplus k_{2}^{2}\\
&\hspace{0.4cm} \oplus k_{1}^{3} \oplus F_{2}^{2}(x_{3}^{0} \oplus k_{2}^{1})) \oplus  F_{1}^{4}(x \oplus F_{1}^{3}(\alpha_{0} \oplus k_{2}^{1}\oplus k_{1}^{2}\\
&\hspace{0.4cm}  \oplus F_{1}^{2}(x_{0}^{0} \oplus k_{1}^{1}))
\oplus x_{0}^{0} \oplus k_{2}^{2} \oplus k_{1}^{3} \oplus F_{2}^{2}(x_{3}^{0} \oplus k_{2}^{1})) \\
&= f^{\mathcal{O}_3}(x),
 \end{aligned}
\]
where $s=F_{1}^{3}(\alpha_{0} \oplus k_{2}^{1} \oplus k_{1}^{2} \oplus F_{1}^{2}(x_{0}^{0} \oplus k_{1}^{1}))
\oplus F_{1}^{3}(\alpha_{1} \oplus k_{2}^{1} \oplus k_{1}^{2} \oplus F_{1}^{2}(x_{0}^{0} \oplus k_{1}^{1}))$. Thus the \autoref{Lemma2} follows.
\end{proof}
Therefore, we can construct a quantum distinguisher for the 6-round FBC-FK structure and recover the period $s$ in $O(n)$ time using Simon algorithm.

\textit{B2) Quantum key-recovery attack on FBC-FK structure}

Using the 6-round quantum distinguisher, we can extend a key-recovery attack to the $r(r>6)$-round FBC-FK structure with the time complexity $O(2^{n(r-6)})$.

Specifically, we construct a key-recovery attack on the $r$-round FBC-FK by appending $(r-6)$ rounds to the 6-round distinguisher.

(1). Query the $r$-round FBC-FK encryption oracle $\mathcal{O}_4$ with the plaintext $(\alpha_{b} \oplus x_{3}^{0}, F_{1}^{1}(\alpha_{b} \oplus x_{3}^{0}) \oplus x_{0}^{0}, x_{3}^{0} \oplus F_{2}^{1}(x_{0}^{0} \oplus x), x_{0}^{0} \oplus x)$ to receive the output $(x_{0}^{r}, x_{1}^{r}, x_{2}^{r}, x_{3}^{r})$. Use the subkeys from the last $(r-6)$ rounds as input, and decrypt these rounds to obtain the intermediate state values $(x_{0}^{6}, x_{1}^{6}, x_{2}^{6}, x_{3}^{6})$.

(2). Guess the subkeys for the last $(r-6)$ rounds. For the guessed subkeys, apply the 6-round distinguisher to verify their correctness. Only correct guesses yield the periodic function described in \autoref{Lemma2}, which can then be detected using Simon algorithm.

For the $r(r>6)$-round FBC-FK structure, we need to guess the $2n(r-6)$ bits using Grover algorithm. Therefore, the subkeys of the $r$-round FBC-FK structure can be recovered with $O(2^{n(r-6)})$ quantum queries. It reduces the time complexity by a factor of $2^{6n}$ compared with quantum brute-force search.

\section{Low-data key-recovery attacks on FBC-KF/FK structures in the Q1 model}\label{sec4}

Inspired by the work of Daiza et al. \cite{Daiza2022}, we propose low-data key-recovery attacks on FBC-KF and FBC-FK structures in the Q1 model. Our attacks operate in the qCPA setting and require only $O(1)$ plaintext-ciphertext pairs. By analyzing the encryption process of the FBC-KF/FK structures and subsequently applying Grover algorithm to search for specific intermediate states, all keys can be efficiently recovered in $O(2^{n/2})$ time.

\subsection{Key-recovery attack on 4-round FBC-KF structure}\label{4.1}

Assume that the adversary has only classical query access and quantum computing capabilities. We perform a quantum key-recovery attack on 4-round FBC-KF structure under the qCPA setting.

According to the 4-round FBC-KF structure shown in Fig. \ref{fig10} (red line), we can easily see that
\begin{equation}\label{Eq14}
\begin{aligned}
x_0^2=x_1^3\oplus x_3^3=x_2^4\oplus F_1^4 (k_1^4\oplus x_1^4\oplus x_3^4).
\end{aligned}
\end{equation}
Similarly, from the decryption perspective, we can describe the output value $x_{0}^{3}$ as $x_{0}^{3}=x_1^4\oplus x_3^4$. However, when considering the encryption direction, $x_{0}^{3}$ can also be expressed as
\begin{equation}\label{Eq15}
\begin{aligned}
x_{0}^{3}&=x_3^0\oplus F_2^2 (x_2^0\oplus k_2^2\oplus F_2^1 (x_3^0\oplus k_2^1))\oplus F_1^3 (x_0^0\oplus x_2^0 \oplus \\
&\quad k_1^3\oplus F_2^1 (x_3^0\oplus k_2^1)\oplus
F_1^2 (x_1^0\oplus k_1^2\oplus F_1^1 (x_0^0\oplus k_1^1))).
\end{aligned}
\end{equation}

By analyzing the intermediate process of the FBC-KF structure (see Fig. \ref{fig10}), we present the following lemma. Note that the adversary's quantum capabilities are restricted to the Q1 model.

\begin{Lemma}
Given a 4-round FBC-KF encryption oracle, where $F$ is a public function and the subkey length is $n$ bits, the adversary can recover all keys in $O(2^{n/2})$ time using Grover algorithm in the Q1 model.
\end{Lemma}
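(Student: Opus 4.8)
The plan is to follow the Daiza et al.\ template \cite{Daiza2022}: make only a constant number of \emph{classical} chosen-plaintext queries, arrange that each observed quantity equals a \emph{public} round function evaluated at a single unknown (a subkey, or a fixed XOR-combination of subkeys), and recover that unknown by one Grover search \cite{Grover1996} over $\{0,1\}^{n}$. Since every $F_1^i,F_2^i$ is public, each inversion is a preimage search costing $O(2^{n/2})$, and because a constant number of them suffices to peel off all eight subkeys $k_1^1,k_2^1,\dots,k_1^4,k_2^4$, the total time is $O(2^{n/2})$ with $O(1)$ data and negligible storage. The FBC left/right symmetry means the $F_1$-side and $F_2$-side recoveries are mirror images of each other, so it is enough to describe one side.

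First I would read off the key-free intermediate values that the dual-Feistel round already yields from a single ciphertext, namely $x_1^4\oplus x_3^4=x_0^3$ and $x_0^4\oplus x_2^4=x_3^3$, so the round-$3$ corners $x_0^3,x_3^3$ are observable without any key. By Eq.~(\ref{Eq14}) the round-$2$ corner then becomes observable once the single subkey $k_1^4$ is fixed, $x_0^2=x_2^4\oplus F_1^4(x_0^3\oplus k_1^4)$, and symmetrically $x_3^2$ once $k_2^4$ is fixed. On the encryption side, Eq.~(\ref{Eq15}) gives the shallower relation $x_0^2=x_0^0\oplus x_2^0\oplus F_2^1(x_3^0\oplus k_2^1)\oplus F_1^2\big(x_1^0\oplus k_1^2\oplus F_1^1(x_0^0\oplus k_1^1)\big)$.

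I would then peel the cipher from the first round outward. Freezing the two right-hand plaintext branches $x_2^0,x_3^0$ makes the $F_2^1$-term a constant, so that as $x_1^0=t$ varies, $x_0^2$ depends on $t$ only through the \emph{single} public function $F_1^2$ with offset $\omega_1=k_1^2\oplus F_1^1(x_0^0\oplus k_1^1)$; a difference $x_0^2(t)\oplus x_0^2(0)=F_1^2(t\oplus\omega_1)\oplus F_1^2(\omega_1)$ kills the unknown constant and leaves $\omega_1$ as the only unknown, recovered by Grover. Repeating with a second value of $x_0^0$ gives $\omega_1$ at two points, whose difference is a pure $F_1^1$-difference yielding $k_1^1$ (and then $k_1^2$) by further Grover inversions; the mirror queries with $x_0^0,x_1^0$ frozen recover $k_2^1,k_2^2$ from $x_3^2$. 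Once rounds $1$--$2$ are known, the $F_2^2$-term and $x_0^2$ in Eq.~(\ref{Eq15}) are computable, so $F_1^3(x_0^2\oplus k_1^3)=x_0^3\oplus x_3^0\oplus F_2^2(\cdot)$ has known right-hand side and gives $k_1^3$ (symmetrically $k_2^3$) by Grover from the \emph{key-free} $x_0^3,x_3^3$; finally, with rounds $1$--$3$ fully known one computes $x_1^3,x_2^3$ and recovers the last-round keys from $F_1^4(x_0^3\oplus k_1^4)=x_0^4\oplus x_1^3$ and $F_2^4(x_3^3\oplus k_2^4)=x_3^4\oplus x_2^3$. A fresh plaintext-ciphertext pair then verifies $(k_1^i,k_2^i)$.

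The main obstacle is the $4$-branch dual-Feistel coupling: unlike the single-$F$ output exploited by Daiza et al., each corner of an FBC-KF round is a \emph{sum of two} public-function images (e.g.\ $x_0^3=x_3^0\oplus F_2^2(\cdot)\oplus F_1^3(\cdot)$), so a naive differential over one plaintext branch yields a two-offset problem $h(t)=F(t\oplus\omega_1)\oplus G(t\oplus\omega_2)$ costing $O(2^{n})$ rather than $O(2^{n/2})$. The plan defeats this by freezing one branch-pair (so one $F$-term becomes a constant offset) and by using the shallower round-$2$ corner $x_0^2$, whose $x_1^0$-dependence is genuinely single-$F$. The price, and the point I expect to be hardest to make airtight, is the resulting interlock: $x_0^2$ (resp.\ $x_3^2$) is only observable after $k_1^4$ (resp.\ $k_2^4$) is fixed, yet $k_1^4$ is naturally recovered last, so the circularity must be broken --- either by pinning $k_1^4$ from an independent single-$F$ relation before the rounds-$1$--$2$ step, or by arguing that only the true $k_1^4$ makes the measured $x_0^2$-profile expressible as one shifted copy of $F_1^2$ while keeping the search over a single $n$-bit register. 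I would also have to confirm that each Grover preimage is the intended one, handling possible non-injectivity of the public $F$'s through the final consistency check.
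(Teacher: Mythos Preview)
Your overall template is the paper's: a constant number of classical queries, each arranged to leave a single $n$-bit unknown inside one public round function, then Grover.  You also correctly diagnose the crux --- the round-$2$ corner $x_0^2$ you want to difference is only visible through $x_0^2=x_2^4\oplus F_1^4(k_1^4\oplus x_1^4\oplus x_3^4)$, so every $x_0^2$-based differential drags in the unknown $k_1^4$.  What you leave open (``the point I expect to be hardest to make airtight'') is exactly where the paper's proof has its one new idea, and without it the plan is not yet a proof.

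The paper breaks the circularity by recovering $k_1^4$ \emph{first}, via the observation you missed: in the encryption-side expression
\[
x_0^2=x_0^0\oplus x_2^0\oplus F_2^1(x_3^0\oplus k_2^1)\oplus F_1^2\bigl(x_1^0\oplus k_1^2\oplus F_1^1(x_0^0\oplus k_1^1)\bigr),
\]
the branch $x_2^0$ enters \emph{only linearly}; it is absent from both round-function inputs.  Two queries differing only in $x_2^0$ (the paper takes $p_1=(x_0^0,0,x_0^0,0)$ and $p_2=(x_0^0,0,0,0)$) therefore make the $F_2^1$ and $F_1^2$ terms cancel completely in $x_2^4\oplus(x_2^4)'$, leaving
\[
x_2^4\oplus(x_2^4)'=x_0^0\oplus F_1^4(k_1^4\oplus x_1^4\oplus x_3^4)\oplus F_1^4(k_1^4\oplus(x_1^4)'\oplus(x_3^4)'),
\]
a genuine single-$F$ equation in $k_1^4$ alone.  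Once $k_1^4$ is known, $x_0^2$ is observable from every ciphertext and the peeling you sketched goes through; the paper's order is $k_1^4\to k_2^1\to k_1^1,k_1^2\to k_1^3\to k_2^2\to k_2^3,k_2^4$, so your ``outside-in then last-round-last'' ordering is inverted.  Your second fallback --- Grover over $k_1^4$ with an inner test that the induced $x_0^2$-profile is a single $F_1^2$-shift --- does not stay in a single $n$-bit register: verifying that profile for a candidate $k_1^4$ is itself an $\omega_1$-search, so the total cost is $O(2^n)$, not $O(2^{n/2})$.  The linear-$x_2^0$ differencing is what makes the bound.
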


\begin{figure}
    \centering
    \includegraphics[width=0.7\linewidth]{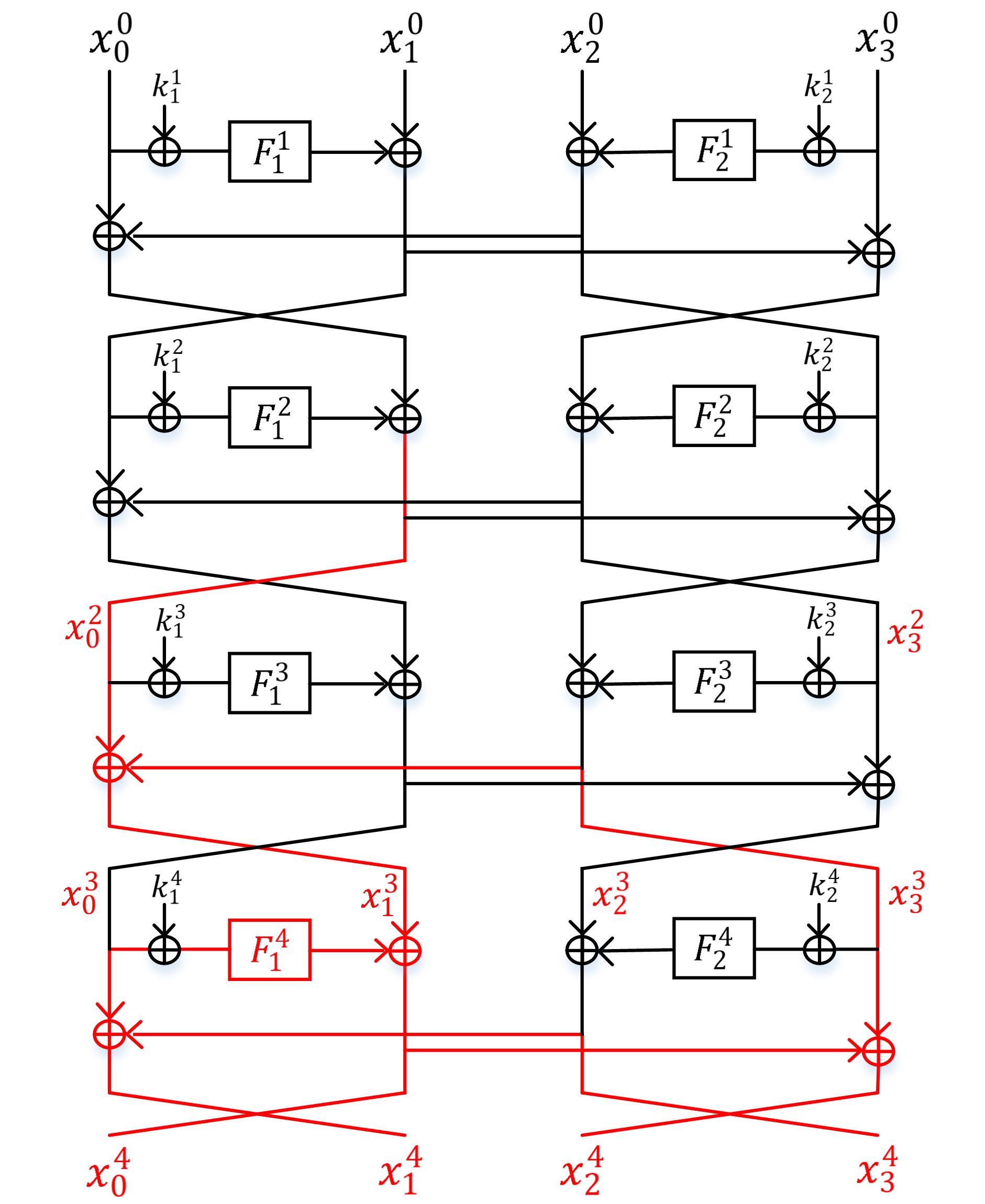}
    \caption{The 4-round FBC-KF structure, with the red line indicating the process of obtaining $x_{0}^{2}$ from the decryption perspective.}
    \label{fig10}
\end{figure}

\begin{proof}
Based on Eqs. (\ref{Eq14}) and (\ref{Eq15}), we design a low-data key-recovery attack on 4-round FBC-KF structure. This key-recovery attack is divided into the following steps.

\textbf{Step 1.} From Eq. (\ref{Eq14}), we can get $x_2^4=x_0^2\oplus F_1^4 (k_1^4\oplus x_1^4\oplus x_3^4)$, where $x_0^2=x_0^0\oplus x_2^0\oplus F_2^1 (x_3^0\oplus k_2^1)\oplus F_1^2 (x_1^0\oplus k_1^2\oplus F_1^1 (x_0^0\oplus k_1^1))$. Given any plaintext $p=(x_0, x_1, x_2, x_3)$, querying the 4-round FBC-KF encryption oracle returns the ciphertext $(y_0,y_1,y_2,y_3)$. The function $G_1(p)$ is defined as $G_1(p)=x_0\oplus x_2\oplus F_2^1 (x_3\oplus k_2^1)\oplus F_1^2 (x_1\oplus k_1^2\oplus F_1^1 (x_0\oplus k_1^1))\oplus F_1^4 (k_1^4\oplus y_1 \oplus y_3)$. The value of $G_1(p)$ corresponds to $y_2$ and can be derived from the ciphertext of $p$.

1. Query the plaintext $p_1=(x_0^0,0,x_0^0,0)$ to the encryption oracle, we can get
\begin{equation}
G_1(p_{1})=F_{2}^{1}(k_{2}^{1}) \oplus F_{1}^{2}(k_{1}^{2} \oplus F_{1}^{1}(x_{0}^{0} \oplus k_{1}^{1})) \oplus F_{1}^{4}(k_{1}^{4} \oplus x_{1}^{4} \oplus x_{3}^{4}).
\end{equation}
Then query $p_2=(x_0^0,0,0,0)$ to get a value of $G_1(p_2)$, where
\begin{equation}
\begin{aligned}
G_1(p_{2}) &= x_{0}^{0} \oplus F_{2}^{1}(k_{2}^{1}) \oplus F_{1}^{2}(k_{1}^{2} \oplus F_{1}^{1}(x_{0}^{0} \oplus k_{1}^{1}))\\
&\hspace{0.7cm} \oplus F_{1}^{4}(k_{1}^{4} \oplus (x_{1}^{4})^{\prime} \oplus (x_{3}^{4})^{\prime}).
\end{aligned}
\end{equation}
Here, \((x_{1}^{4})'\) and \((x_{3}^{4})'\)\footnote{\label{note:const}\((x_{1}^{4})'\) and \((x_{3}^{4})'\) are known constants different from \(x_{1}^{4}\) and \(x_{3}^{4}\), respectively. In the paper, notations such as \((\cdot)'\), \((\cdot)''\), \((\cdot)'^{3}\), \((\cdot)'^{4}\), and \((\cdot)'^{5}\) all denote mutually distinct constants.} denote the outputs of the second and fourth branches of the 4-round FBC-KF structure when the plaintext input is \(p_2\).
By XORing the values $G_1(p_1)$ and $G_1(p_2)$, we can obtain $G_1(p_1)\oplus  G_1(p_2)=x_0^0\oplus F_1^4 (k_1^4\oplus x_1^4\oplus x_3^4)\oplus F_1^4 (k_1^4\oplus (x_1^4)'
\oplus (x_3^4)')$. Since the values of $G_1(p_1)\oplus G_1(p_2)$, $x_{0}^{0}$, $x_{1}^{4}$, $x_{3}^{4}$, $(x_1^4)'$, and $(x_3^4)'$ are known, \textbf{the key $k_1^4$ can be calculated using Grover algorithm in $O(2^{n/2})$ time.} After obtaining $k_1^4$, the value $F_1^4 (k_1^4\oplus x_1^4\oplus x_3^4)$ can be regarded as a known parameter.

2. Query the plaintexts $p_3=(x_0^0,x_1^0,x_2^0,x_3^0)$ and $p_4=(x_0^0,x_1^0,x_2^0,(x_3^0)')$ to the encryption oracle to receive two distinct ciphertexts, where \((x_3^0)'\) is a constant different from \(x_3^0\). We have
\begin{equation}
\begin{aligned}
G_1(p_{3}) =& x_{0}^{0} \oplus x_{2}^{0} \oplus F_{2}^{1}(x_{3}^{0} \oplus k_{2}^{1}) \oplus F_{1}^{2}(x_{1}^{0} \oplus k_{1}^{2} \oplus F_{1}^{1}(x_{0}^{0}\\
&\hspace{0.2cm} \oplus k_{1}^{1})) \oplus F_{1}^{4}(k_{1}^{4} \oplus (x_{1}^{4})'' \oplus (x_{3}^{4})'')
\end{aligned}
\end{equation}
and
\begin{equation}
\begin{aligned}
G_1(p_{4}) =& x_{0}^{0} \oplus x_{2}^{0} \oplus F_{2}^{1}((x_{3}^{0})^{\prime} \oplus k_{2}^{1}) \oplus F_{1}^{2}(x_{1}^{0} \oplus k_{1}^{2} \oplus F_{1}^{1}(x_{0}^{0}\\
&\hspace{0.2cm} \oplus k_{1}^{1})) \oplus F_{1}^{4}(k_{1}^{4} \oplus (x_{1}^{4})'^{3} \oplus (x_{3}^{4})'^{3}).
\end{aligned}
\end{equation}
By XORing the values $G_1(p_{3})$ and $G_1(p_{4})$, we can obtain
\[
G_1(p_{3}) \oplus G_1(p_{4}) = F_{2}^{1}(x_{3}^{0} \oplus k_{2}^{1}) \oplus F_{2}^{1}((x_{3}^{0})^{\prime} \oplus k_{2}^{1}) \oplus C_{1},
\]
where $C_1=F_1^4 (k_1^4\oplus (x_1^4)''\oplus (x_3^4)'')\oplus F_1^4 (k_1^4\oplus (x_1^4)'^{3}\oplus (x_3^4)'^{3})$ is known. \textbf{Since $k_2^1$ is the only unknown in this expression, it can be efficiently recovered using Grover algorithm in $O(2^{n/2})$ time}.

3. Query the plaintext $p_5=(x_0^0,(x_1^0)',x_2^0,x_3^0)$ to the encryption oracle to obtain the ciphertext $((x_0^4)'^4,(x_1^4)'^4,(x_2^4)'^4,(x_3^4)'^4)$, and then compute the XOR of $G_1(p_3)$ and $G_1(p_5)$ to get
\begin{equation}\label{Eq20}
G_1(p_{3}) \oplus G_1(p_{5}) = F_{1}^{2}(\beta_{1}) \oplus F_{1}^{2}(\beta_{2}) \oplus C_{2},
\end{equation}
where $\beta_{1}=x_1^0\oplus k_1^2\oplus F_1^1 (x_0^0\oplus k_1^1)$, $\beta_{2}=(x_1^0)'\oplus k_1^2\oplus F_1^1 (x_0^0\oplus k_1^1)$, and $C_{2}=F_1^4 (k_1^4\oplus (x_1^4)''\oplus (x_3^4)'')\oplus F_1^4 (k_1^4\oplus (x_1^4)'^4\oplus (x_3^4)'^4)$. It follows that $\beta_{1} \oplus \beta_{2}=x_1^0\oplus (x_1^0)'$. By substituting this into Eq. (\ref{Eq20}), we get $G_1(p_3)\oplus G_1(p_5)=F_1^2 (\beta_{1})\oplus F_1^2 (\beta_{1} \oplus x_1^0\oplus (x_1^0)')\oplus C_2$. Since $\beta_{1}$ is the only unknown, it can be efficiently obtained using Grover algorithm in $O(2^{n/2})$ time.

4. Query the plaintext $p_6=((x_0^0)',x_1^0,x_2^0,x_3^0)$ to the encryption oracle to receive the ciphertext. Then, XOR $G_1(p_3)$ and $G_1(p_6)$ to get
\[
G_1(p_3)\oplus G_1(p_6)=F_1^2 (\beta_{1})\oplus F_1^2 (\beta_{3})\oplus C_3,
\]
where $\beta_{3}=x_1^0\oplus k_1^2\oplus F_1^1 ((x_0^0)'\oplus k_1^1)$ and $C_3=F_1^4 (k_1^4\oplus (x_1^4)''\oplus (x_3^4)'')\oplus F_1^4 (k_1^4\oplus (x_1^4)'^5\oplus (x_3^4)'^5)\oplus x_0^0\oplus (x_0^0)'$. Since $G_1(p_3)$, $G_1(p_6)$, $\beta_{1}$, and $C_3$ are known, we can obtain $\beta_{3}$ by using Grover algorithm in $O(2^{n/2})$ time. By computing the XOR of $\beta_{1}$ and  $\beta_{3}$, we get $\beta_{1}\oplus \beta_{3}=F_1^1 (x_0^0\oplus k_1^1)\oplus F_1^1 ((x_0^0)'\oplus k_1^1)$. \textbf{The key $k_1^1$ can be efficiently recovered using Grover algorithm in $O(2^{n/2})$ time. Finally, $k_1^2$ can be derived from $\beta_{1}=x_1^0\oplus k_1^2\oplus F_1^1 (x_0^0\oplus k_1^1)$.}

\textbf{Step 2.}
According to Eq. (\ref{Eq15}), we define the function $G_2(p)$ as
\begin{equation}
\begin{aligned}
G_2(p)& = x_{3} \oplus F_{2}^{2}(x_{2} \oplus k_{2}^{2} \oplus F_{2}^{1}(x_{3} \oplus k_{2}^{1})) \oplus F_{1}^{3}(x_{0} \oplus x_{2}\oplus\\
&\hspace{0.4cm}  k_{1}^{3} \oplus  F_{2}^{1}(x_{3} \oplus k_{2}^{1}) \oplus F_{1}^{2}(x_{1} \oplus k_{1}^{2} \oplus F_{1}^{1}(x_{0} \oplus k_{1}^{1}))),
\end{aligned}
\end{equation}
where $p=(x_0,x_1,x_2,x_3)$ is an arbitrary input.

1. Query the plaintexts $p_3=(x_0^0,x_1^0,x_2^0,x_3^0)$ and $p_5=(x_0^0,(x_1^0)',x_2^0,x_3^0)$ to the encryption oracle, we can easily obtain
\begin{equation}\label{Eq22}
G_2(p_{3}) \oplus G_2(p_{5})=F_{1}^{3}(\beta_{4}) \oplus F_{1}^{3}(\beta_{5}),
\end{equation}
where $\beta_{4}=x_0^0\oplus x_2^0 \oplus k_1^3\oplus F_2^1 (x_3^0\oplus k_2^1)\oplus F_1^2 (x_1^0\oplus k_1^2\oplus F_1^1 (x_0^0\oplus k_1^1))$ and $\beta_{5}=x_0^0\oplus x_2^0 \oplus k_1^3\oplus F_2^1 (x_3^0 \oplus k_2^1)\oplus F_1^2 ((x_1^0)'\oplus k_1^2\oplus F_1^1 (x_0^0\oplus k_1^1))$. It is worth noting that in Eq. (\ref{Eq22}), only $k_1^3$ remains unknown. \textbf{Therefore, $k_1^3$ can be recovered in $O(2^{n/2})$ time using Grover algorithm.}

2. \textbf{At this point, $G_2(p_3)$ contains only one unknown value $k_2^2$, which can be recovered using Grover algorithm in $O(2^{n/2})$ time}.

3. Similarly, by analyzing the outputs $x_3^2$ and $x_3^3$, \textbf{the keys $k_2^3$ and $k_2^4$ can also be recovered using Grover algorithm in $O(2^{n/2})$ time.}
\end{proof}

\textbf{Analysis.} In the Q1 model, the key-recovery attack on the 4-round FBC-KF structure requires only a constant number of plaintext-ciphertext pairs. The time complexity of Grover search for each intermediate state does not exceed $O(2^{n/2})$. Consequently, the overall attack runs in $O(2^{n/2})$ time. The data complexity is $O(1)$, and the classical memory complexity is negligible.

\subsection{Key-recovery attack on 5-round FBC-FK structure}\label{4.2}

In this section, we propose a low-data quantum key-recovery attack on 5-round FBC-FK structure in the Q1 model.

\begin{figure}
    \centering
    \includegraphics[width=0.7\linewidth]{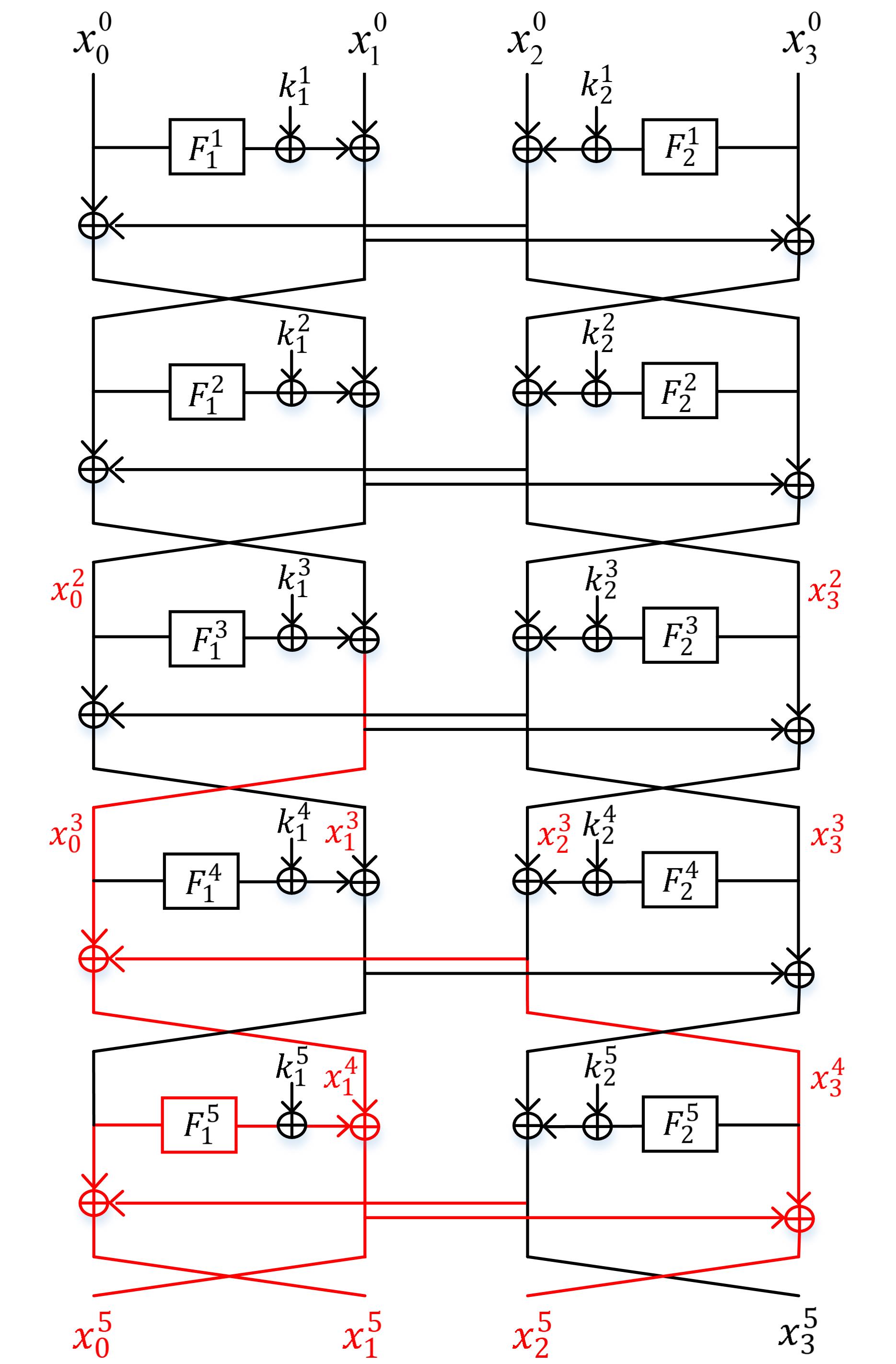}
    \caption{5-round FBC-FK structure.}
    \label{fig11}
\end{figure}
According to the 5-round FBC-FK structure shown in Fig. \ref{fig11}, we can easily calculate (red line)

\begin{equation}
\begin{aligned}
    x_{0}^{3}&=x_{1}^{4} \oplus x_{3}^{4}=k_{1}^{5} \oplus F_{1}^{5}(x_{1}^{5} \oplus x_{3}^{5}) \oplus x_{2}^{5}.
\end{aligned}
\end{equation}
Note that $F_{1}^{5}(x_{1}^{5} \oplus x_{3}^{5}) \oplus x_{2}^{5}=k_{1}^{5} \oplus x_{0}^{3}$ is a known value. Based on the encryption structure of FBC-FK, we can describe the output value $x_{0}^{3}$ as
\[
\begin{aligned}
    x_{0}^{3}= &k_{1}^{3}\oplus k_{2}^{2} \oplus x_{3}^{0} \oplus F_{2}^{2}(x_{2}^{0} \oplus k_{2}^{1} \oplus F_{2}^{1}(x_{3}^{0})) \oplus
F_{1}^{3}(k_{1}^{2} \oplus x_{0}^{0}\\
& \oplus x_{2}^{0}\oplus k_{2}^{1} \oplus F_{2}^{1}(x_{3}^{0}) \oplus F_{1}^{2}(k_{1}^{1} \oplus x_{1}^{0} \oplus F_{1}^{1}(x_{0}^{0}))).
\end{aligned}
\]
Therefore, we can express $F_{1}^{5}(x_{1}^{5} \oplus x_{3}^{5}) \oplus x_{2}^{5}$ as
\begin{equation}\label{Eq24}
\begin{aligned}
F_{1}^{5} &(x_{1}^{5} \oplus x_{3}^{5}) \oplus x_{2}^{5}\\
&=k_{1}^{5} \oplus k_{1}^{3} \oplus k_{2}^{2} \oplus x_{3}^{0} \oplus F_{2}^{2}(x_{2}^{0} \oplus k_{2}^{1} \oplus F_{2}^{1}(x_{3}^{0})) \oplus F_{1}^{3}(k_{1}^{2}\\
&\quad \oplus x_{0}^{0}
 \oplus x_{2}^{0} \oplus
k_{2}^{1} \oplus F_{2}^{1}(x_{3}^{0}) \oplus F_{1}^{2}(k_{1}^{1} \oplus x_{1}^{0} \oplus F_{1}^{1}(x_{0}^{0}))).
\end{aligned}
\end{equation}
Similarly, since $x_3^3=x_0^4\oplus x_2^4=k_2^5\oplus F_2^5 (x_0^5\oplus x_2^5)\oplus x_1^5$, we can obtain
\[
F_{2}^{5}(x_{0}^{5} \oplus x_{2}^{5}) \oplus x_{1}^{5} =k_{2}^{5} \oplus x_{3}^{3}.
\]
According to the encryption structure of FBC-FK, we can derive that
\begin{equation*}
\begin{aligned}
x_{3}^{3}& = k_{2}^{3} \oplus k_{1}^{2} \oplus x_{0}^{0} \oplus F_{1}^{2}(k_{1}^{1} \oplus x_{1}^{0} \oplus F_{1}^{1}(x_{0}^{0})) \oplus
F_{2}^{3}(k_{2}^{2}\oplus x_{3}^{0}\\
&\hspace{0.4cm} \oplus x_{1}^{0} \oplus
k_{1}^{1} \oplus F_{1}^{1}(x_{0}^{0}) \oplus F_{2}^{2}(k_{2}^{1} \oplus x_{2}^{0} \oplus F_{2}^{1}(x_{3}^{0}))).
\end{aligned}
\end{equation*}
Then, $F_{2}^{5}(x_{0}^{5} \oplus x_{2}^{5}) \oplus x_{1}^{5}$ can be written as
\begin{equation}\label{Eq25}
\begin{split}
F_{2}^{5}&(x_{0}^{5} \oplus x_{2}^{5}) \oplus x_{1}^{5} \\
&= k_{2}^{5} \oplus k_{2}^{3} \oplus k_{1}^{2} \oplus x_{0}^{0} \oplus F_{1}^{2}(k_{1}^{1} \oplus x_{1}^{0} \oplus F_{1}^{1}(x_{0}^{0}))\oplus
F_{2}^{3}(k_{2}^{2}\\
&\quad \oplus x_{3}^{0}\oplus x_{1}^{0} \oplus
k_{1}^{1} \oplus F_{1}^{1}(x_{0}^{0}) \oplus F_{2}^{2}(k_{2}^{1} \oplus x_{2}^{0} \oplus F_{2}^{1}(x_{3}^{0}))).
\end{split}
\end{equation}
By analyzing the intermediate process of the 5-round FBC-FK structure (see Fig. \ref{fig11}), we present the following lemma.

\begin{Lemma}
Given a 5-round FBC-FK encryption oracle, where $F$ is a public function and the subkey length is $n$ bits, the adversary can recover all keys in $O(2^{n/2})$ time using Grover algorithm in the Q1 model.
\end{Lemma}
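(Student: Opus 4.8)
The plan is to mirror the 4-round FBC-KF attack in Sect.~\ref{4.1}. From the ciphertext $(x_0^5,x_1^5,x_2^5,x_3^5)$ I can compute, using only the public functions $F_1^5,F_2^5$, the two quantities $G_1(p)=F_1^5(x_1^5\oplus x_3^5)\oplus x_2^5$ and $G_2(p)=F_2^5(x_0^5\oplus x_2^5)\oplus x_1^5$; by Eqs.~(\ref{Eq24}) and~(\ref{Eq25}) these equal $k_1^5\oplus x_0^3$ and $k_2^5\oplus x_3^3$, so every queried plaintext $p=(x_0,x_1,x_2,x_3)$ yields one evaluation of the key-dependent expressions on those right-hand sides. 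The goal is to peel off the ten subkeys one at a time, arranging each chosen-plaintext difference so that exactly one subkey (or one combined unknown) remains, which \emph{Grover} then recovers in $O(2^{n/2})$; already-recovered keys are fed back to collapse the nested $F$-layers in the later steps.

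The difficulty absent from the FBC-KF case is that $x_0^3$ and $x_3^3$ are \emph{doubly} nested, of the form $F_1^3(\,\cdot\,\oplus F_1^2(\,\cdot\,\oplus F_1^1(\cdot)))$, and the FK branching makes $k_1^1\oplus F_1^1(x_0)$ and $k_2^1\oplus F_2^1(x_3)$ appear in several branches simultaneously, so a single plaintext coordinate cannot be varied in isolation. To bootstrap I would first isolate the top-level term $F_2^2(x_2\oplus k_2^1\oplus F_2^1(x_3))$ inside $G_1$ by a \emph{compensated} pair of plaintexts: shift $x_0$ and $x_2$ by the same constant $\delta$ and shift $x_1$ by the publicly computable amount $F_1^1(x_0\oplus\delta)\oplus F_1^1(x_0)$, holding $x_3$ fixed. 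This leaves both the inner argument $k_1^1\oplus x_1\oplus F_1^1(x_0)$ and the outer argument $k_1^2\oplus x_0\oplus x_2\oplus k_2^1\oplus F_2^1(x_3)$ unchanged while mapping $k_2^1\oplus x_2\oplus F_2^1(x_3)\mapsto k_2^1\oplus x_2\oplus\delta\oplus F_2^1(x_3)$, so the two evaluations of $G_1$ differ only by $F_2^2(\,\cdot\,)\oplus F_2^2(\,\cdot\oplus\delta)$, a one-unknown instance in $k_2^1$. A symmetric construction on $G_2$ — now using the recovered $k_2^1$ to pick an $x_2$-shift $\epsilon$, computing the induced shift $\gamma=F_2^2(w\oplus\epsilon)\oplus F_2^2(w)$ of the inner value $w=k_2^1\oplus x_2\oplus F_2^1(x_3)$, and cancelling it with an $x_1$-shift of $\gamma$ so that the nested $F_2^3$ term is held constant — isolates the top-level $F_1^2$ term and returns $k_1^1$.

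With $k_1^1$ and $k_2^1$ known, the first-round contributions $k_1^1\oplus F_1^1(x_0)$ and $k_2^1\oplus F_2^1(x_3)$ become public functions of the plaintext, so the innermost arguments are fully determined; ordinary plaintext differences then isolate $F_1^3$ and $F_2^3$ as single-unknown instances and return $k_1^2$ and $k_2^2$. At this point $G_1$ and $G_2$ each reveal only a \emph{combined} constant — $k_1^5\oplus k_1^3$ from $G_1$ and $k_2^5\oplus k_2^3$ from $G_2$ — and I expect separating the middle-round key from the last-round key to be the main obstacle, since no single observable splits them. The way through is to escalate to the fourth round: the ciphertext already gives $x_0^4=x_1^5\oplus x_3^5$ and $x_3^4=x_0^5\oplus x_2^5$, and writing $x_0^4=x_1^3\oplus k_1^4\oplus F_1^4(x_0^3)$ with $x_1^3=k_2^3\oplus(\text{known})$ and $x_0^3=k_1^3\oplus(\text{known})$ over \emph{two} plaintexts cancels the additive offset $k_2^3\oplus k_1^4$ and leaves $F_1^4(k_1^3\oplus c)\oplus F_1^4(k_1^3\oplus c')$, a one-unknown instance in $k_1^3$; the mirrored relation through $x_3^4$ and $F_2^4$ yields $k_2^3$. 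Finally $k_1^5,k_2^5$ follow from the combined constants, and $k_1^4,k_2^4$ directly from the two fourth-round equations, after which a constant number of additional plaintext-ciphertext pairs verifies the recovered key. Each step consumes $O(1)$ data and runs one Grover search of cost $O(2^{n/2})$, so the whole attack runs in $O(2^{n/2})$ time with negligible classical memory.
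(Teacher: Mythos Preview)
Your proposal is correct and reaches the same $O(2^{n/2})$ bound with $O(1)$ data, but the opening move is genuinely different from the paper's. The paper's key trick for bootstrapping through the triple nesting is to \emph{choose} $x_1=F_1^1(x_0)$ (queries $q_1,q_2$), which collapses the innermost argument $k_1^1\oplus x_1\oplus F_1^1(x_0)$ to the constant $k_1^1$; varying $x_0$ then moves only the outer $F_1^3$ layer, so a single XOR of two $H_1$ values already yields a one-unknown Grover instance for the $F_1^3$ argument $\delta_1$, and two more queries peel off $k_1^1$, $k_2^1$, $k_1^2$ in quick succession. Your route instead keeps the nesting intact and \emph{compensates} across several plaintext coordinates simultaneously (shifting $x_0,x_2$ together and correcting $x_1$ by the publicly computable $F_1^1$-difference) so that the whole $F_1^3(\cdot)$ block is frozen and only the standalone $F_2^2$ term survives the XOR; you then feed the recovered $k_2^1$ forward to build the next compensation. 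Both devices are sound. The paper's $x_1=F_1^1(x_0)$ substitution is cleaner---one coordinate choice kills the inner layer outright---while your multi-coordinate compensation is more mechanical and arguably easier to generalise to other FK-type structures where no single coordinate neutralises the inner function. After $k_1^1,k_2^1,k_1^2,k_2^2$ are known the two proofs reconverge: the paper's terse ``through the branches $x_1^3,x_2^3$ and $x_0^2,x_3^2$ derive all remaining keys'' is exactly your escalation through the directly observable $x_0^4=x_1^5\oplus x_3^5$ and $x_3^4=x_0^5\oplus x_2^5$ to split $k_1^3$ from $k_1^5$ (resp.\ $k_2^3$ from $k_2^5$) via a two-query $F_1^4$/$F_2^4$ difference, which you spell out more explicitly than the paper does.
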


\begin{proof}
Based on Eqs. (\ref{Eq24}) and (\ref{Eq25}), we design a low-data key-recovery attack on 5-round FBC-FK structure. This key-recovery attack is divided into the following steps.

\textbf{Step 1.} According to Eq. (\ref{Eq24}), for any chosen plaintext $p=(x_0,x_1,x_2,x_3)$, we define the function $H_1(p)$ as
\[
\begin{aligned}
H_1(p)&=
k_{1}^{5} \oplus k_{1}^{3} \oplus k_{2}^{2} \oplus x_{3} \oplus F_{2}^{2}(x_{2} \oplus k_{2}^{1} \oplus F_{2}^{1}(x_{3})) \oplus
F_{1}^{3}(k_{1}^{2}\\
&\quad \oplus x_{0} \oplus x_{2} \oplus k_{2}^{1} \oplus F_{2}^{1}(x_{3}) \oplus F_{1}^{2}(k_{1}^{1} \oplus x_{1} \oplus F_{1}^{1}(x_{0}))).
\end{aligned}
\]

1. Query the plaintexts $q_1=(x_0^0,F_1^1 (x_0^0),x_2^0,x_3^0)$ and $q_{2}=((x_{0}^{0})^{\prime}, F_{1}^{1}((x_{0}^{0})^{\prime}), x_{2}^{0}, x_{3}^{0})$ to the encryption oracle to receive corresponding ciphertexts $(x_0^5,x_1^5,x_2^5,x_3^5)$ and $((x_0^5)',(x_1^5)',(x_2^5)',(x_3^5)')$\footnote{$(x_i^j)'$ is a fixed constant different from $x_i^j$.}, respectively. By computing $H_1(q_1)$ and $H_1(q_2)$, we obtain
\begin{equation}
\begin{aligned}
   H_1(q_{1}) \oplus H_1(q_{2})=F_{1}^{3}(\delta_{1}) \oplus F_{1}^{3}(\delta_{2}),
\end{aligned}
\end{equation}
where $\delta_{1}=k_{1}^{2} \oplus x_{0}^{0} \oplus x_{2}^{0} \oplus k_{2}^{1} \oplus F_{2}^{1}(x_{3}^{0}) \oplus F_{1}^{2}(k_{1}^{1})$ and $\delta_{2}=k_{1}^{2} \oplus(x_{0}^{0})^{\prime} \oplus x_{2}^{0} \oplus k_{2}^{1} \oplus F_{2}^{1}(x_{3}^{0}) \oplus F_{1}^{2}(k_{1}^{1})$. We can get $\delta_{1} \oplus \delta_{2}=x_{0}^{0}\oplus (x_{0}^{0})^{\prime}$, i.e., $H_1(q_{1}) \oplus H_1(q_{2})=F_{1}^{3}(\delta_{1}) \oplus F_{1}^{3}(\delta_{1} \oplus x_{0}^{0} \oplus(x_{0}^{0})^{\prime})$. Here, $H_1(q_1)$, $H_1(q_2)$, $x_0^0$, $(x_0^0)'$, and function $F_1^3$ are known, then $\delta_{1}$ can be obtained using Grover algorithm in $O(2^{n/2})$ time.

2. Query $q_3=(x_0^0,x_1^0,x_2^0,x_3^0)$ to the encryption oracle to receive the ciphertext, and then compute the XOR of $H_1(q_1)$ and $H_1(q_3)$ to obtain
\begin{equation}\label{Eq27}
H_1(q_1)\oplus H_1(q_3)=F_1^3 (\delta_{1})\oplus F_1^3 (\delta_{3}),
\end{equation}
where $\delta_3 = k_1^2 \oplus x_0^0 \oplus x_2^0 \oplus k_2^1 \oplus F_2^1(x_3^0) \oplus F_1^2(k_1^1 \oplus x_1^0 \oplus F_1^1(x_0^0))$. Since $H_1(q_1)$, $H_1(q_3)$, $\delta_1$, and function $F_1^3$ are known, $\delta_3$ can be obtained by using Grover algorithm in $O(2^{n/2})$ time. Next, XOR $\delta_1$ and $\delta_3$ to obtain $\delta_1\oplus \delta_3=F_1^2 (k_1^1)\oplus F_1^2 (k_1^1\oplus x_1^0\oplus F_1^1 (x_0^0))$. \textbf{Only $k_1^1$ remains unknown, and its value can be calculated using Grover algorithm in $O(2^{n/2})$ time}.

3. Query the encryption oracle with $q_{4}=(x_{0}^{0}, x_{1}^{0}, x_{2}^{0},(x_{3}^{0})^{\prime})$, and compute the XOR of $H_1(q_3)$ and $H_1(q_4)$. The result satisfies
\begin{equation}\label{Eq28}
\begin{aligned}
   &H_1(q_{3}) \oplus H_1(q_{4})\\
&\hspace{0.2cm}=x_{3}^{0} \oplus (x_{3}^{0})' \oplus F_{2}^{2}(x_{2}^{0} \oplus k_{2}^{1} \oplus F_{2}^{1}(x_{3}^{0}))
\oplus F_{2}^{2}(x_{2}^{0} \oplus k_{2}^{1}\oplus \\
&\hspace{0.4cm} F_{2}^{1}((x_{3}^{0})'))\oplus F_{1}^{3}(\delta_{3})
\oplus F_{1}^{3}(\delta_{3} \oplus F_{2}^{1}(x_{3}^{0}) \oplus F_{2}^{1}((x_{3}^{0})')),
\end{aligned}
\end{equation}
where $\delta_{3}$ is calculated in Eq. (\ref{Eq27}). Thus, we can regard $x_{3}^{0} \oplus (x_{3}^{0})' \oplus F_{1}^{3}(\delta_{3}) \oplus F_{1}^{3}(\delta_{3} \oplus F_{2}^{1}(x_{3}^{0}) \oplus F_{2}^{1}((x_{3}^{0})'))
$ as a constant $C_{4}$, and then Eq. (\ref{Eq28}) reduces to
\begin{equation}
\begin{aligned}
   &H_1(q_{3}) \oplus H_1(q_{4})\\
   &\hspace{0.2cm}= F_{2}^{2}(x_{2}^{0} \oplus k_{2}^{1} \oplus F_{2}^{1}(x_{3}^{0})) \oplus F_{2}^{2}(x_{2}^{0} \oplus k_{2}^{1}\\&\hspace{0.4cm} \oplus F_{2}^{1}((x_{3}^{0})^{\prime})) \oplus C_{4}.
\end{aligned}
\end{equation}
\textbf{Here, we can use Grover algorithm to get $k_2^1$ in $O(2^{n/2})$ time}. Once $k_1^1$ and $k_2^1$ are known, \textbf{we can easily obtain $k_1^2$ from $\delta_1$.}

\textbf{Step 2.}
 Observing Eq. (\ref{Eq25}), for an arbitrary plaintext $p=(x_0,x_1,x_2,x_3)$, we assume that the function $H_2(p)=k_{2}^{5} \oplus k_{2}^{3} \oplus F_{2}^{3}(k_{2}^{2} \oplus x_{3} \oplus x_{1} \oplus k_{1}^{1} \oplus F_{1}^{1}(x_{0}) \oplus F_{2}^{2}(k_{2}^{1} \oplus x_{2} \oplus F_{2}^{1}(x_{3}))) \oplus C_{5} .$ Since some of the values in the formula have already been computed in the \textbf{step 1}, we denote it as $C_{5}$, where $C_5=k_1^2\oplus x_0\oplus F_1^2 (k_1^1\oplus x_1\oplus F_1^1 (x_0))$ is a known paramater.

1. Query the encryption oracle with $q_3=(x_0^0, x_1^0, x_2^0, x_3^0)$ and $q_5=(x_0^0, (x_1^0)', x_2^0, x_3^0)$ to obtain the ciphertexts. We then have
\begin{equation}\label{Eq30}
\begin{aligned}
H_2(q_{3}) \oplus H_2(q_{5})=F_{2}^{3}(\delta_{4}) \oplus F_{2}^{3}(\delta_{5}) \oplus C_{6} \oplus C_{7},
\end{aligned}
\end{equation}
where $\delta_{4} = k_{2}^{2} \oplus x_{3}^{0} \oplus x_{1}^{0} \oplus k_{1}^{1} \oplus F_{1}^{1}(x_{0}^{0}) \oplus F_{2}^{2}(k_{2}^{1} \oplus x_{2}^{0} \oplus F_{2}^{1}(x_{3}^{0}))
$,
$\delta_{5}=k_{2}^{2} \oplus x_{3}^{0} \oplus(x_{1}^{0})^{\prime} \oplus k_{1}^{1} \oplus F_{1}^{1}(x_{0}^{0}) \oplus F_{2}^{2}(k_{2}^{1} \oplus x_{2}^{0} \oplus F_{2}^{1}(x_{3}^{0})),
$
and the corresponding values
$C_6 = k_1^2 \oplus x_0^0 \oplus F_1^2(k_1^1 \oplus x_1^0 \oplus F_1^1(x_0^0)),
C_7 = k_1^2 \oplus x_0^0 \oplus F_1^2(k_1^1 \oplus (x_1^0)' \oplus F_1^1(x_0^0)).$
By substituting $\delta_{5}=\delta_{4}\oplus x_1^0\oplus (x_1^0)'$ into Eq. (\ref{Eq30}), the value of $\delta_{4}$ can be obtained using Grover algorithm in $O(2^{n/2})$ time. \textbf{Then $k_2^2$ can be directly computed by applying XOR operations to the terms in the expression of $\delta_{4}$.}

2. At this point, $k_1^1$, $k_2^1$, $k_1^2$ and $k_2^2$ have been computed by Grover algorithm in $O(2^{n/2})$ time. Through the encryption process involving the branches \( x_1^3 \) and \( x_2^3 \), as well as \( x_0^2 \) and \( x_3^2 \), we can sequentially derive all keys using Grover algorithm in \( O(2^{n/2}) \) time.
\end{proof}

\textbf{Analysis.}
In the key-recovery attack on the 5-round FBC-FK structure, we only need to select 5 plaintext-ciphertext pairs to recover four subkeys. The time complexity of searching for the unknown values using Grover algorithm is $O(2^{n/2})$. The time and data requirements for recovering the remaining subkeys are not expected to exceed those of the attack described above. Therefore, the time complexity of the overall attack is $O(2^{n/2})$. The data complexity is $O(1)$, and the classical memory complexity can be negligible.

\section{Conclusion}\label{sec5}

In this paper, we analyse the FBC-F, FBC-KF, and FBC-FK structures in Q1 and Q2 models respectively. In the Q2 model, we constructed 4-round quantum distinguishers for the FBC-F and FBC-KF structures, and a 6-round quantum distinguisher for the FBC-FK structure. Building on these results, we implemented quantum key-recovery attacks on the $r$-round FBC-F/KF/FK structures, which outperform generic quantum brute-force search in terms of time complexity. In addition, we propose low-data key-recovery attacks on 4-round FBC-KF and 5-round FBC-FK structures in the Q1 model. These attacks operate in the qCPA setting, maintain the advantage of low data complexity, and can recover all keys of the FBC-KF and FBC-FK structures in $O(2^{n/2})$ time.

\section*{Acknowledgments}
This research was funded by the National Natural Science Foundation of China (Grant Nos. 62171131, 62272056 and 62372048), Fujian Province Natural Science Foundation (Grant Nos. 2022J01186 and 2023J01533), Fujian Province Young and Middle-aged Teacher Education Research Project (Grant No. JAT231018)  and Open Foundation of State Key Laboratory of Networking and Switching Technology (Beijing University of Posts and Telecommunications)(SKLNST-2024-1-05).

{\appendices
\section{Details of the 4-round quantum distinguisher against FBC-KF structure}\label{A}

The construction of the 4-round distinguisher for the FBC-KF structure is similar to that for the FBC-F structure. First, consider arbitrary constants $\alpha_{0}, \alpha_{1}, x_{0}^{0}, x_{3}^{0} \in \{0,1\}^{n}$ such that $\alpha_{0}\neq \alpha_{1}$. Given a 4-round FBC-KF encryption oracle $\mathcal{O}_{5}$, we can define
\[
    \begin{aligned}
        f^{\mathcal{O}_{5}}:\{0,1\}^{n} &\rightarrow\{0,1\}^{n} \\
x &\mapsto x_{1}^{4}(\alpha_{0}) \oplus x_{3}^{4}(\alpha_{0}) \oplus x_{1}^{4}(\alpha_{1}) \oplus x_{3}^{4}(\alpha_{1}),
    \end{aligned}
\]
where \( x_i^j(\alpha_b) \) denotes the output of the \((i+1)\)-th branch in the \( j \)-th round of the FBC-KF structure when the input plaintext is \( (x_0^0, \alpha_b, x, x_3^0) \). As shown in Fig. \ref{fig12}, the indices satisfy \( i \in \{0,1,2,3\} \), \( j \in \{1,2,3,4\} \), and \( b \in \{0,1\} \). From the 4-round FBC-KF structure, we obtain
\begin{equation}
    \begin{aligned}
f^{\mathcal{O}_{5}}(x)
& =x_{1}^{4}(\alpha_{0}) \oplus x_{3}^{4}(\alpha_{0}) \oplus x_{1}^{4}(\alpha_{1}) \oplus x_{3}^{4}(\alpha_{1})\\
& =x_{0}^{3}(\alpha_{0}) \oplus x_{0}^{3}(\alpha_{1}) .
\end{aligned}
\end{equation}
Next, we have the following lemma.

\begin{figure}
    \centering
    \includegraphics[width=0.7\linewidth]{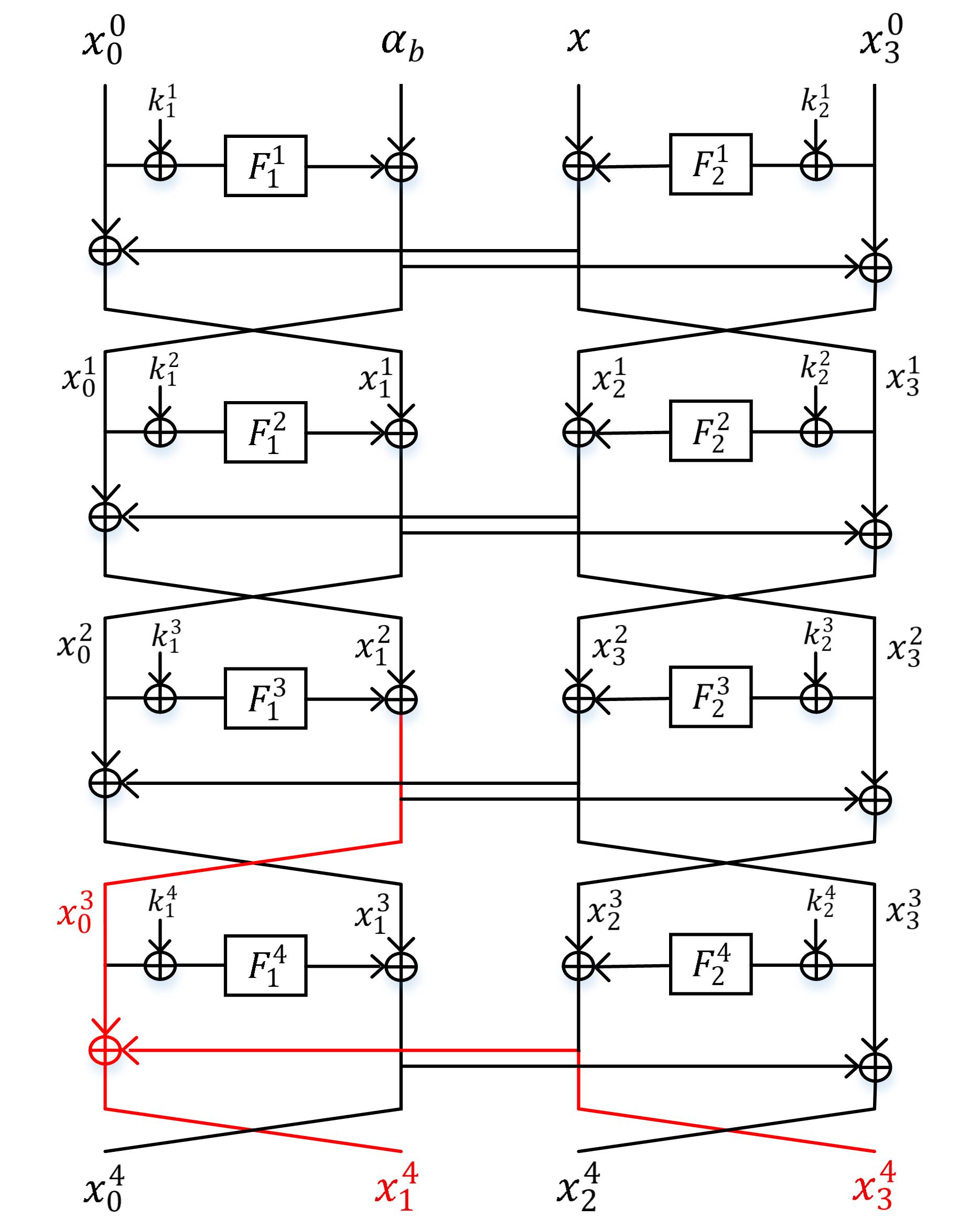}
    \caption{The 4-round FBC-KF structure, with the red line indicating the process of obtaining $x_{0}^{3}$ from the decryption perspective.}
    \label{fig12}
\end{figure}

\begin{Lemma}
If $\mathcal{O}_5$ is the encryption algorithm of $4$-round FBC-KF structure, for any $x \in \{0,1\}^n$, the function $f^{\mathcal{O}_5}$ satisfies
\[
\begin{aligned}
f^{\mathcal{O}_5}(x)=&f^{\mathcal{O}_5}(x \oplus F_{1}^{2}(\alpha_{0}\oplus k_1^2 \oplus F_{1}^{1}(x_{0}^{0}\oplus k_1^1) ) \oplus F_{1}^{2}(\alpha_{1}\\
&\hspace{0.1cm}\oplus k_1^2 \oplus F_{1}^{1}(x_{0}^{0}\oplus k_1^1))).
\end{aligned} \]
That is, $f^{\mathcal{O}_5}$ has the period $s=F_{1}^{2}(\alpha_{0}\oplus k_1^2 \oplus F_{1}^{1}(x_{0}^{0}\oplus k_1^1) ) \oplus F_{1}^{2}(\alpha_{1}\oplus k_1^2 \oplus F_{1}^{1}(x_{0}^{0}\oplus k_1^1))$.
\end{Lemma}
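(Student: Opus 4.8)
The plan is to follow the template of the FBC-F proof (Lemma 1) almost verbatim, the only structural difference being that in FBC-KF each subkey $k_1^i,k_2^i$ is injected \emph{inside} the round-function argument rather than added to its output. First I would feed the chosen plaintext $(x_0^0,\alpha_b,x,x_3^0)$ into the FBC-KF round formula and record the first-round output. The essential feature is that the left branch becomes $x_0^1=\alpha_b\oplus F_1^1(x_0^0\oplus k_1^1)$, so it is precisely $x_0^1$ and $x_2^1=x_3^0\oplus\alpha_b\oplus F_1^1(x_0^0\oplus k_1^1)$ that carry the constant $\alpha_b$, while $x_3^1=x\oplus F_2^1(x_3^0\oplus k_2^1)$ is already $\alpha_b$-free.

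Next I would compute the second round and verify the crucial cancellation. Since $x_1^2=x_0^1\oplus x_2^1\oplus F_2^2(x_3^1\oplus k_2^2)$ and both $x_0^1$ and $x_2^1$ contain $\alpha_b$, the term $x_1^2=x_3^0\oplus F_2^2(x_3^1\oplus k_2^2)$ is \emph{independent} of $\alpha_b$. By contrast, $x_0^2=x_1^1\oplus F_1^2(x_0^1\oplus k_1^2)$ depends on $\alpha_b$ only through the single factor $A_b:=F_1^2(\alpha_b\oplus k_1^2\oplus F_1^1(x_0^0\oplus k_1^1))$. Collecting the $x$-dependent part and the $\alpha_b$-free constants of $x_0^2\oplus k_1^3$ into $x\oplus C$, where $C=x_0^0\oplus F_2^1(x_3^0\oplus k_2^1)\oplus k_1^3$ is independent of both $x$ and $\alpha_b$, I would obtain $x_0^2\oplus k_1^3=x\oplus C\oplus A_b$.

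Then I would invoke the third-round relation $x_0^3=x_1^2\oplus F_1^3(x_0^2\oplus k_1^3)$ together with the decryption identity $x_1^4\oplus x_3^4=x_0^3$, which still holds for FBC-KF because the fourth-round formulas give $x_1^4\oplus x_3^4=x_0^3$ exactly as in the FBC-F case. Forming $f^{\mathcal{O}_5}(x)=x_0^3(\alpha_0)\oplus x_0^3(\alpha_1)$ cancels the $\alpha_b$-free $x_1^2$ contribution and leaves $f^{\mathcal{O}_5}(x)=F_1^3(x\oplus C\oplus A_0)\oplus F_1^3(x\oplus C\oplus A_1)$. Finally, with $s=A_0\oplus A_1$ — exactly the claimed period — a direct substitution shows that replacing $x$ by $x\oplus s$ merely interchanges the two arguments of $F_1^3$, so $f^{\mathcal{O}_5}(x\oplus s)=f^{\mathcal{O}_5}(x)$.

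The main obstacle is purely bookkeeping rather than conceptual: one must carefully track which intermediate branches carry $\alpha_b$ through the key-injected functions and confirm that every $\alpha_b$-dependence outside the single factor $A_b$ cancels. Once that cancellation is established, the periodicity check is a routine XOR computation identical in spirit to Lemma 1.
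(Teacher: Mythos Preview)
Your proposal is correct and follows essentially the same approach as the paper's proof: both feed $(x_0^0,\alpha_b,x,x_3^0)$ through the FBC-KF rounds, observe that $x_1^2$ is $\alpha_b$-free while $x_0^2\oplus k_1^3$ depends on $\alpha_b$ only through $A_b=F_1^2(\alpha_b\oplus k_1^2\oplus F_1^1(x_0^0\oplus k_1^1))$, use $x_1^4\oplus x_3^4=x_0^3=x_1^2\oplus F_1^3(x_0^2\oplus k_1^3)$, and conclude $f^{\mathcal{O}_5}(x)=F_1^3(x\oplus C\oplus A_0)\oplus F_1^3(x\oplus C\oplus A_1)$ has period $s=A_0\oplus A_1$. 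Your $A_b,C$ shorthand is a tidy repackaging of exactly the computations the paper writes out in full.
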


\begin{proof}
 By choosing the plaintext $(x_{0}^{0}, \alpha_{b}, x, x_{3}^{0})$ as the input of the FBC-KF structure, the output of the first round is $(x_{0}^{1}, x_{1}^{1}, x_{2}^{1}, x_{3}^{1})$, where

\begin{equation}\label{Eq32}
\begin{aligned}
x_{0}^{1}&=F_{1}^{1}(x_{0}^{0}\oplus k_{1}^{1}) \oplus \alpha_{b},\\
x_{1}^{1}&=x_{0}^{0} \oplus x \oplus F_{2}^{1}(x_{3}^{0}\oplus k_{2}^{1}),\\
x_{2}^{1}&=x_{3}^{0}\oplus \alpha_{b} \oplus F_{1}^{1}(x_{0}^{0}\oplus k_{1}^{1}) ,\\
x_{3}^{1}&=x \oplus F_{2}^{1}(x_{3}^{0}\oplus k_{2}^{1}).
\end{aligned}
\end{equation}
Similarly, the second round output $(x_{0}^{2}, x_{1}^{2}, x_{2}^{2}, x_{3}^{2})$ can be expressed as
\begin{equation}\label{Eq33}
  \begin{aligned}
x_{0}^{2}&=x \oplus x_{0}^{0} \oplus F_{2}^{1}(x_{3}^{0}\oplus k_{2}^{1}) \oplus F_{1}^{2}(\alpha_{b}\oplus  k_{1}^{2}\\
&\hspace{0.5cm} \oplus F_{1}^{1}(x_{0}^{0}\oplus k_{1}^{1})), \\
x_{1}^{2}&=x_{3}^{0} \oplus F_{2}^{2}(x \oplus k_{2}^{2}\oplus  F_{2}^{1}(x_{3}^{0}\oplus k_{2}^{1})), \\
x_{2}^{2}&=x_{0}^{0} \oplus F_{1}^{2}( k_{1}^{2}\oplus \alpha_{b} \oplus F_{1}^{1}(k_{1}^{1}\oplus x_{0}^{0})), \\
x_{3}^{2}&=x_{3}^{0}\oplus \alpha_{b}  \oplus F_{1}^{1}(k_{1}^{1}\oplus x_{0}^{0}) \oplus F_{2}^{2}(x\oplus k_{2}^{2}\\
&\hspace{0.5cm}\oplus F_{2}^{1}(x_{3}^{0}\oplus k_{2}^{1})).
\end{aligned}
\end{equation}
As shown by the red line in Fig. \ref{fig12}, using the ciphertexts \( x_{1}^4 \) and \( x_{3}^4 \) obtained from the encryption oracle \(\mathcal{O}_5\) of the 4-round FBC-KF structure, we can derive \( x_{0}^3 \) from the relation
\[
x_{1}^4 \oplus x_{3}^4 = x_{0}^3.
\]
Meanwhile, based on the FBC-KF encryption structure, \( x_0^3 \) can also be expressed as
\begin{equation}\label{Eq34}
\begin{aligned}
x_{0}^{3}=
F_{1}^{3}(x_{0}^{2}\oplus k_{1}^{3}) \oplus x_{1}^{2}.
\end{aligned}
\end{equation}
Therefore, it can be seen from Eqs. (\ref{Eq33}) and (\ref{Eq34}) that
\[
\begin{aligned}
x_{0}^{3}=&x_{3}^{0} \oplus F_{2}^{2}(x\oplus k_{2}^{2} \oplus F_{2}^{1}(x_{3}^{0}\oplus k_{2}^{1})) \oplus F_{1}^{3}( x \oplus x_{0}^{0}\oplus k_{1}^{3}  \\
&\hspace{0.2cm}\oplus F_{2}^{1}(x_{3}^{0}\oplus k_{2}^{1})\oplus  F_{1}^{2}( \alpha_{b}\oplus  k_{1}^{2}\oplus F_{1}^{1}(x_{0}^{0}\oplus k_{1}^{1}))).
\end{aligned}\]
Hence, the function $f^{\mathcal{O}_5}$ can be described as
\[
    \begin{aligned}
f^{\mathcal{O}_5}(x)&=x_{0}^{3}(\alpha_{0}) \oplus x_{0}^{3}(\alpha_{1}) \\
&=F_{1}^{3}( x \oplus x_{0}^{0}\oplus k_{1}^{3} \oplus F_{2}^{1}(x_{3}^{0}\oplus k_{2}^{1})\oplus  F_{1}^{2}( \alpha_{0}\oplus  k_{1}^{2} \oplus\\
&\hspace{0.2cm} F_{1}^{1}(x_{0}^{0}\oplus k_{1}^{1}))) \oplus F_{1}^{3}( x \oplus x_{0}^{0}
\oplus k_{1}^{3}\oplus F_{2}^{1}(x_{3}^{0}\oplus k_{2}^{1})\\
&\hspace{0.2cm}\oplus  F_{1}^{2}( \alpha_{1}\oplus  k_{1}^{2}\oplus F_{1}^{1}(x_{0}^{0}\oplus k_{1}^{1}))).
\end{aligned}
\]
The function $f^{\mathcal{O}_5}$ has a period $s$ since it satisfies

\[
\begin{aligned}
&f^{\mathcal{O}_5}(x \oplus F_{1}^{2}(\alpha_{0}\oplus k_1^2 \oplus F_{1}^{1}(x_{0}^{0}\oplus k_1^1))\\&\hspace{1.2cm} \oplus F_{1}^{2}(\alpha_{1}\oplus k_1^2 \oplus F_{1}^{1}(x_{0}^{0}\oplus k_1^1)))\\
&\hspace{0.2cm}=F_{1}^{3}( x \oplus \textcolor{red}{F_{1}^{2}(\alpha_{0}\oplus k_1^2 \oplus F_{1}^{1}(x_{0}^{0}\oplus k_1^1))} \oplus F_{1}^{2}(\alpha_{1} \oplus  k_1^2\\
&\hspace{0.6cm} \oplus F_{1}^{1}(x_{0}^{0}\oplus k_1^1)) \oplus x_{0}^{0}\oplus k_{1}^{3} \oplus F_{2}^{1}(x_{3}^{0}\oplus k_{2}^{1}) \oplus \textcolor{red}{F_{1}^{2}( \alpha_{0}} \\
&\hspace{0.68cm}
\textcolor{red}{\oplus  k_{1}^{2}\oplus F_{1}^{1}(x_{0}^{0}\oplus k_{1}^{1}))}) \oplus F_{1}^{3}( x \oplus F_{1}^{2}(\alpha_{0}\oplus k_1^2 \oplus F_{1}^{1}(\\
&\hspace{0.68cm} x_{0}^{0} \oplus k_1^1))\oplus \textcolor{blue}{F_{1}^{2}(\alpha_{1}\oplus k_1^2 \oplus F_{1}^{1}(x_{0}^{0}\oplus k_1^1))}\oplus x_{0}^{0} \oplus k_{1}^{3}\\
&\hspace{0.68cm} \oplus F_{2}^{1}(x_{3}^{0}\oplus k_{2}^{1})\oplus  \textcolor{blue}{F_{1}^{2}( \alpha_{1} \oplus  k_{1}^{2}\oplus F_{1}^{1}(x_{0}^{0}\oplus k_{1}^{1}))})\\
&\hspace{0.2cm}=F_{1}^{3}( x \oplus F_{1}^{2}(\alpha_{1}\oplus k_1^2 \oplus F_{1}^{1}(x_{0}^{0}\oplus k_1^1)) \oplus x_{0}^{0}\oplus k_{1}^{3} \oplus\\
&\hspace{0.68cm} F_{2}^{1}(x_{3}^{0}\oplus k_{2}^{1})) \oplus F_{1}^{3}( x \oplus F_{1}^{2}(\alpha_{0}\oplus k_1^2 \oplus F_{1}^{1}(x_{0}^{0} \oplus k_1^1))\\
&\hspace{0.68cm} \oplus x_{0}^{0} \oplus k_{1}^{3}\oplus F_{2}^{1}(x_{3}^{0}\oplus k_{2}^{1})) \\
&\hspace{0.2cm}=f^{\mathcal{O}_5}(x),
\end{aligned}
\]
where $s=F_{1}^{2}(\alpha_{0}\oplus k_1^2 \oplus F_{1}^{1}(x_{0}^{0}\oplus k_1^1)) \oplus F_{1}^{2}(\alpha_{1}\oplus k_1^2 \oplus F_{1}^{1}(x_{0}^{0}\oplus k_1^1)).$ The proof of the above lemma holds.
\end{proof}

In summary, using the function $f^{\mathcal{O}_5}$, we can construct a 4-round distinguisher for the FBC-KF structure. The period $s$ can then be efficiently recovered by the Simon algorithm in $O(n)$ time.
}

\begin{IEEEbiography}[{\includegraphics[width=1in,height=1.2in,clip,keepaspectratio]{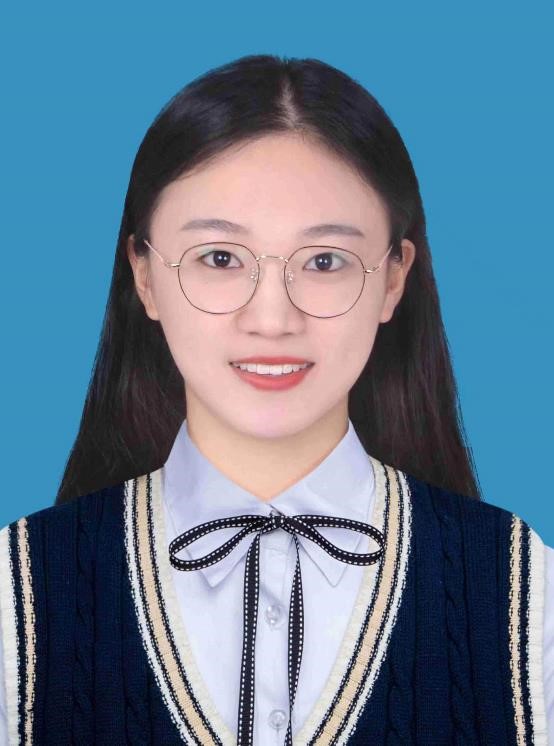}}]{Yan-Ying Zhu}
was born in 1998. She received the M.S. degree in Mathematics from Shandong University of Science and Technology, Qingdao, China, in 2023. She is currently pursuing the Ph.D. degree in Cyberspace Security, Fujian Normal University, Fuzhou, China. Her current research interests is quantum cryptography and quantum computation.\end{IEEEbiography}

\begin{IEEEbiography}[{\includegraphics[width=1in,height=1.3in,clip,keepaspectratio]{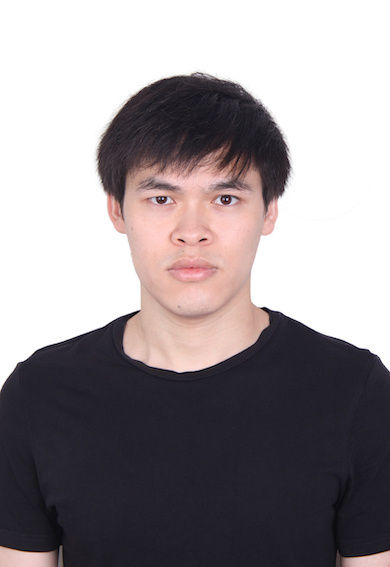}}]{Bin-Bin Cai}
received the B.S. degree in computer science and M.S. degree in computer application both from Fujian Normal University, Fuzhou, China, in 2014 and 2017, respectively, and Ph.D. degree in cyberspace security from Beijing University of Posts and Telcommunications, Beijing, China, in 2023. He is currently a lecturer at the College of Computer and Cyber Security, Fujian Normal University, China. His research interests include quantum cryptography, quantum cryptanalysis and quantum computing.\end{IEEEbiography}

\begin{IEEEbiography}[{\includegraphics[width=1in,height=1.25in,clip,keepaspectratio]{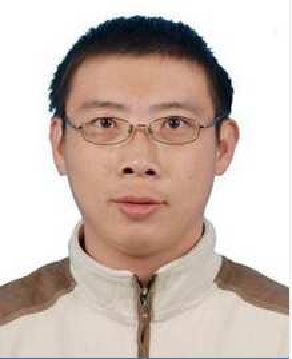}}]{Fei Gao}
received the B.E. degree in communication engineering and the Ph.D. degree in cryptography from the Beijing University of Posts and Telecommunications (BUPT), Beijing, China, in 2002 and 2007, respectively. He is currently the Director of the State Key Laboratory of Networking and Switching Technology, Network Security Research Center (NSRC), BUPT, and also with the Peng Cheng Laboratory, Center for Quantum Computing, Shenzhen, China. He is also working on the practical quantum cryptographic protocols, quantum nonlocality, and quantum algorithms. Prof. Gao is a member of the Chinese Association for Cryptologic Research (CACR).
\end{IEEEbiography}

\begin{IEEEbiography}[{\includegraphics[width=1in,height=1.25in,clip,keepaspectratio]{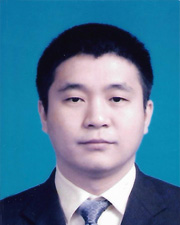}}]{Song Lin}
received the B.S. degree in computer science and M.S. degree in mathematics both from Fujian Normal University, Fuzhou, China, in 1999 and 2005, respectively, and Ph.D. degree in cryptography from Beijing University of Posts and Telcommunications, Beijing, China, in 2009. He is currently a Professor at the College of Computer and Cyber Security, Fujian Normal University, China. His research interests include quantum cryptography, information security, and quantum computing. Prof. Lin is a member of the Chinese Association for Cryptologic Research.\end{IEEEbiography}

\end{document}